\documentclass[11pt]{article}
\usepackage{float, hyperref, graphicx, amsfonts, amsthm, amsmath, amssymb}
\pdfpagewidth 8.5in
\pdfpageheight 11in
\setlength\topmargin{0in}
\setlength\headheight{0in}
\setlength\headsep{0in}
\setlength\textheight{8.1in}
\setlength\textwidth{6.5in}
\setlength\oddsidemargin{0in}
\setlength\evensidemargin{0in}\setlength\parindent{0.7cm}
\setlength\parskip{0in}

\theoremstyle{plain}
\newtheorem{theorem}{Theorem}[section]
\newtheorem{lemma}[theorem]{Lemma}
\newtheorem{claim}[theorem]{Claim}

\newtheorem{corollary}[theorem]{Corollary}
\newcommand{\arc}[1]{\overrightarrow{#1}}
\newcommand{\diam}{\operatorname{diam}}
\newcommand{\dist}{\operatorname{dist}}
\newcommand{\improve}{\operatorname{improve}}

\title{On a Bounded Budget Network Creation Game\thanks{A preliminary version of this paper appeared in Proceedings of the 23rd ACM Symposium on Parallelism in Algorithms and Architectures (SPAA 2011), pp.~207--214.}}

\author{Shayan Ehsani\thanks{Current address:
   Department of Management Science and Engineering, Stanford University, email: \texttt{shayane@stanford.edu}.} \\
   Department of Computer Engineering\\
   Sharif University of Technology \\
   \texttt{ehsani@ce.sharif.edu}\\
   \and
   Saber Shokat Fadaee\thanks{Current address: College of Computer and Information Sciences, Northeastern University, email: \texttt{saber@ccs.neu.edu}.}\\
   Department of Computer Engineering\\
   Sharif University of Technology\\
   \texttt{shokat@ce.sharif.edu}\\
   \and
   MohammadAmin Fazli\\
   Department of Computer Engineering\\
   Sharif University of Technology\\
   \texttt{fazli@ce.sharif.edu}\\
   \and
   Abbas Mehrabian\\
   Department of Combinatorics and Optimization\\
   University of Waterloo\\
   \texttt{amehrabi@uwaterloo.ca}\\
   \and
   Sina Sadeghian Sadeghabad\thanks{Current address: Department of Combinatorics and Optimization, University of Waterloo, email: \texttt{s3sadegh@uwaterloo.ca}.}\\
   Department of Computer Engineering\\
   Sharif University of Technology\\
   \texttt{s\_sadeghian@ce.sharif.edu}\\
   \and
   MohammadAli Safari\\
   Department of Computer Engineering\\
   Sharif University of Technology\\
   \texttt{msafari@sharif.edu}\\
   \and
   Morteza Saghafian\\
   Department of Mathematical Sciences\\
   Sharif University of Technology\\
   \texttt{saghafian@ce.sharif.edu}\\
   }
\date{}

\begin{document}

\maketitle

\begin{abstract}
\noindent We consider a network creation game
in which each player (vertex) has a fixed budget to establish links to other players.
In our model, each link has unit price and
each agent tries to minimize its cost,
which is either its local diameter or its total distance to other players in the (undirected) underlying graph of the created network.
Two versions of the game are studied:
in the MAX version, the cost incurred to a vertex is the maximum distance between the vertex and other vertices,
and in the SUM version, the cost incurred to a vertex is the sum of distances between the vertex and other vertices.
We prove that in both versions pure Nash equilibria exist,
but the problem of finding the best response of a vertex is NP-hard.
We take the social cost of the created network to be its diameter,
and next we study the maximum possible diameter of an equilibrium graph with $n$ vertices in various cases.
When the sum of players' budgets is $n-1$,
the equilibrium graphs are always trees,
and we prove that their maximum diameter
is $\Theta(n)$ and $\Theta(\log n)$
in MAX and SUM versions, respectively.
When each vertex has unit budget (i.e.~can establish link to just one vertex),
the diameter of any equilibrium graph in either version is $\Theta(1)$.
We give examples of equilibrium graphs in the MAX version,
such that all vertices have positive budgets and yet the diameter is $\Omega(\sqrt{\log n})$.
This interesting (and perhaps counter-intuitive) result shows that increasing the budgets
may increase the diameter of equilibrium graphs and hence deteriorate the network structure.
Then we prove that every equilibrium graph in the SUM version has diameter $2^{O({\sqrt{\log n}})}$.
Finally, we show that if the budget of each player is at least $k$,
then every equilibrium graph in the SUM version is $k$-connected or has diameter smaller than 4.
\end{abstract}

{{\bf Keywords}: Network Creation Games, Nash Equilibria, Price of Anarchy, Local Diameter, Braess's Paradox, Bounded Budget.}


\section{Introduction}
In recent years, a lot of research has been conducted on network design problems, because of their importance in computer science and operations research~\cite{christofides, johnson, cheriyan}.
The aim in this line of research is usually to build a minimum cost network that satisfies certain properties, and the network structure is usually determined by a central authority.
However, this is in contrast to many real world situations
such as social networks, client-server systems and peer-to-peer networks,
where network structures are determined in a distributed manner by selfish agents~\cite{feldman, golle, skyrms}.
The formation of these networks can be formulated as a game,
which is usually called a \emph{network creation game}.
In network creation games, as in any other game, there are selfish players that interact with each other.
Each player has its own objective, and attempts to minimize the cost it incurs in the network,
regardless of how its actions affect other agents.
The players are placed at the {nodes} of the network graph,
and can create links to other nodes with certain restrictions,
e.g.~there could be an upper bound for the number of links a player constructs.
The utility functions of the players should be defined properly
to be consistent with their natural interests,
e.g.~minimizing the cost to communicate with other players.

In network creation games,
players interact with each other by adding and removing links between themselves.
Many variants of these games arise by defining different utility functions and
possible transitions between strategies of each player.
Some authors have studied undirected graphs while others have considered directed graphs.
For undirected graphs there is an issue of ``ownership'':
when there exists a link between two nodes, but just one of the nodes
wants to keep it, is it removed from the network?
For directed graphs, the question is whether both endpoints
of a link can use it to communicate.
In some models creating a link incurs a cost to the player,
i.e.~the number of created links appears in utility functions,
while in other models restrictions for creating links appear in the set of available strategies for players.

Although the creation of a network in such a game is a dynamic process,
the structure of the resulting network (if it does converge to a stable structure)
provides  valuable information about the effectiveness of the game rules.
Thus, existence and structure of \emph{stable networks}
has been widely studied.
Jackson and Wolinsky~\cite{jackson} defined a network to be \emph{pairwise stable} if, roughly speaking,
none of the nodes are willing to delete an incident link
and no pair of non-adjacent nodes are willing to build a link between themselves.
Fabrikant et al.~\cite{fabrikant} considered \emph{Nash equilibria} of the game
as its stable states.
A Nash equilibrium, which is a well known concept in game theory,
is a state of a game in which no player can increase her utility by changing her strategy,
assuming the strategies of other players are kept unchanged.
The main difference between these two concepts is that,
when considering pairwise stability,
we are thinking of the players cooperating with each other,
whereas we think of non-cooperating players when we consider Nash equilibria.

The efficiency of the network formed by a game is measured by different factors rather than the player utilities.
We are mostly interested in measuring a global parameter,
for instance the {diameter} (the largest distance between any pair of nodes),
and the vertex connectivity (the minimum number of nodes whose removal disconnects the network)
of the network are two of the possible candidates; these are important parameters in every network.
The global parameter, called the \emph{social cost} of the network,
quantizes how effective the network is.
To measure how the efficiency of a system degrades due to selfish behavior of its agents,
we find the social cost achieved by selfish players in a stable state of the network creation game,
and calculate its ratio to the minimum social cost among all networks.
This parameter, called the \emph{price of anarchy} of the game,
was introduced by Koutsoupias and Papadimitriou~\cite{poa} and
has been the focus of study of many works on network creation (and many other) games.

In this paper we introduce and study a new class of network creation games,
which is motivated by the work Laoutaris et al.~\cite{laoutaris}.
In our model, there is an upper bound on the number of links each player can create,
hence the name \emph{bounded budget network creation game}.


\subsection{Previous Work}

Jackson and Wolinsky~\cite{jackson} were one of the first studying the stable states of networks created by selfish players.
They introduced the notion of pairwise stability and studied the efficiency of pairwise stable networks.
Fabrikant, Luthra, Maneva, Papadimitriou and Shenker~\cite{fabrikant} suggested studying Nash equilibria
instead of pairwise stable networks.
In their model the network graph is undirected, and there is a link between two nodes if at least
one of the nodes wants to create it.
There is a cost $\alpha$ for creating a link,
and the goal of each node is to minimize the sum of its distances to other nodes minus the amount she pays for creating links.
Note that every node wants to build more links to get closer to other vertices,
but, on the other hand, the more links she creates, the more money she has to pay.
They showed that for large $\alpha$,
the equilibrium graphs have few edges and are tree-like,
whereas for small values of $\alpha$, the equilibrium graphs are dense.
This implies that the structure of the equilibria is highly affected by the parameter $\alpha$.
They conjectured that there is a universal constant $A$ such that for $\alpha > A$, any equilibrium graph is a tree.
Albers, Eilts, Even-Dar, Mansour and Roditty~\cite{albers} disproved this conjecture using geometric constructions.
The results of Fabrikant et al.~were improved in \cite{corbo} and \cite{demaine}.
Demaine, Hajiaghayi, Mahini and Zadimoghaddam~\cite{demaine} found a $2^{O{(\sqrt{\log n})}}$ upper bound
on the diameter of equilibrium graphs, where $n$ is the number of players (nodes),
and conjectured that indeed the diameter of equilibrium graphs is polylogarithmic.
They also considered another version of the game,
where the utility function of each player is its maximum distance to other nodes minus the amount she pays for creating links.
Mihal\'{a}k and Schlegel~\cite{mihalak} further studied the game,
and proved that for the original (sum of distances) version,
the price of anarchy is $O(1)$ for $\alpha > 273 n$,
and for the latter (maximum distance) version,
the price of anarchy is $2^{O{(\sqrt{\log n})}}$ for all $\alpha$,
and is $O(1)$ for $\alpha > 129$.
Brandes, Hoefer, and Nick~\cite{disconnected_equilibria} studied a variant in which
the cost for a pair of disconnected pair is a finite value, as opposed to previous variants,
where this value was infinity.

Laoutaris, Poplawski, Rajaraman, Sundaram and Teng~\cite{laoutaris} introduced another variant of network creation games,
in which a budget is dedicated to each player for buying links.
That is, each player can build a certain number of links, and there is no cost term in the utility function,
and thus no parameter $\alpha$.
This is a natural and interesting perspective for formulating peer-to-peer and overlay networks.
To eliminate the intricacies with ownership of the links, they assumed that the links are directed
and can be used by one of their endpoints.
They defined the utility function of a player as its average distance to other nodes.
If all players have the same budget $k$,
they proved that Nash equilibria always exist
and that the price of anarchy is between
$c_1 \sqrt{\frac{n \textrm{log} k}{k\textrm{log} n}}$
and
$c_2 \sqrt{\frac{n\textrm{log} k}{\textrm{log}n}}$
for suitable positive constants $c_1,c_2$.
Our model is mainly motivated by their work, but we assume that links are bidirectional
and can be used by both of their endpoints.
However, one of the endpoints of a link is its ``owner'' and she is responsible for creating it.

Alon, Demaine, Hajiaghayi and Leighton~\cite{adhl} simplified Fabrikant et al.'s model by eliminating the parameter $\alpha$
and introduced basic network creation games.
In their model, each node locally tries to minimize its maximum distance or average distance
to other nodes, by swapping one incident edge at a time.
They say an undirected graph is a \emph{swap equilibrium}
if no node can increase its utility by swapping just one of its incident edges.
By bounding the possible transitions between strategies of a node,
they get a broader set of equilibria, which includes all Nash equilibria in the previous model.
Therefore, any upper bound for the price of anarchy for swap equilibria
is an upper bound for Nash equilibria as well.
Actually, considering swap equilibria seems to be more realistic too,
since each player can determine its best response in polynomial time and thus
the computational needs of each player is decreased.
Also, the removal of the parameter $\alpha$ has made the proofs cleaner and more general,
and thus we have also used this idea in our model.
The main difference between our games and basic network creation games is the ownership of the links.
In their model any of the two endpoints of a link can remove it,
whereas in our model any link is owned by one of its endpoints,
and only that node is able to remove it.
Two versions of basic creation games are considered, the SUM version and the MAX version,
depending on whether the goal of each player is to minimize its average distance or maximum distance to other nodes.
Alon~et~al.~found an upper bound of $2^{O(\sqrt{\textrm{log} n})}$ on the diameter of SUM equilibria,
which is stronger than previous bounds on models with a parameter $\alpha$.
We prove that the same upper bound holds for our model.
However, in the MAX version we observe essential differences even in tree equilibria:
in basic network creation games, the diameter of equilibria is at most $3$,
whereas in our model, we have tree equilibria with diameter $\Theta(n)$.

There are utility functions studied in the literature
other than the average or maximum distance of a node to other nodes.
We will not pursue them here but mention a few of them for completeness.
Bei, Chen, Teng, Zhang and Zhu~\cite{bei}
investigated a model in which each node aims to maximize its \emph{betweenness}.
This notion is introduced originally in social network analysis,
and roughly speaking, measures the amount of information passing through a node
among all pairwise information exchanges.
The \emph{clustering coefficient} of a node is defined as the probability that two of its randomly selected neighbors are directly connected to each other.
Brautbar and Kearn~\cite{brautbar} considered clustering coefficient as an incentive in creation of networks such as social networks, and considered network creation games in which the utility of each node equals its clustering coefficient.

In the model we introduce here, every player has a budget which determines the number of links it is able to build.
This simplifies the utility functions, but complicates the strategy set of the players.
Nevertheless, we are able to prove many results in both SUM and MAX versions of the game,
and we observe that, just like the model of Laoutaris et al.~\cite{laoutaris},
changing the node budgets significantly changes the structure of equilibrium networks.
In contrast to \cite{laoutaris}, in our model,
once a link is established, both its endpoints can use it equally.
This is a natural model in applications where the direction of links does not matter,
e.g.~computer networks.

\subsection{Our model and notation}

Let $G$ be a directed graph on $n$ vertices, and let $V(G)$ denote the vertex set of $G$.
The \emph{underlying graph} of $G$, which is an undirected graph obtained by ignoring the arc directions in $G$, is denoted by $U(G)$.
If both arcs $\arc{uv}$ and $\arc{vu}$ are in $G$, then $uv$ is a multiple edge with multiplicity 2 in $U(G)$,
which is viewed as a cycle with 2 vertices.
In the following, whenever we refer to the distance between two vertices of $G$, we mean their distance in $U(G)$.
The distance between two vertices $u$ and $v$ is denoted by $\dist(u,v)$.
If $u$ and $v$ are in different connected components of $U(G)$,
then it is natural to define their distance as infinity.
However, we define their distance to be a  large constant $C_{\textrm{inf}}$ so that the vertices
have the incentive to decrease the number of connected components.
We choose $C_{\textrm{inf}} = n^2$ for a reason that will be discussed later.
The \emph{diameter} of $G$, written $\diam(G)$, is the maximum distance between any two vertices of $G$.
For a vertex $u$ and subset $A \subseteq V(G)$,
the distance between $u$ and $A$, written $\dist(u,A)$, is defined as
$$\dist(u,A) = \min \{\dist(u,a) : a \in A\}.$$
The \emph{local diameter} of a vertex $u$ is the maximum of its distances to other vertices.
Note that if the graph is disconnected, then the local diameter of all vertices is $n^2$.

Let $n$ be a positive integer and $b_1,b_2,\dots,b_n$ be nonnegative integers less than $n$.
A \emph{bounded budget network creation game} with parameters $b_1,b_2,\dots,b_n$, denoted by $(b_{1},b_{2},\dots,b_{n})$-BG, is the following game.
There are $n$ players and the strategy of player $i$ is a subset $S_i \subseteq \{1,2,\dots,n\}\backslash\{i\}$ with $|S_i| = b_i$.
We may build a directed graph $G$ for every strategy profile $(S_1,\dots,S_n)$ of the game,
with vertex set $\{u_1,\dots,u_n\}$ and such that $\arc{u_i u_j}$ is an arc in $G$ if $j \in S_i$.
Any such graph $G$ is called a \emph{realization} of $(b_{1},b_{2},\dots,b_{n})$-BG.
We will identify each vertex with its corresponding player.
If $\arc{u_i u_j}$ is in $G$, then we say $\arc{u_i u_j}$ is \emph{owned by} vertex $u_i$.
Note that $u_i$ owns exactly $b_i$ arcs.
We think of $b_i$ as the \emph{budget} available to vertex $u_i$,
which she can use to build links to other vertices.
If both $\arc{u_i u_j}$ and $\arc{u_j u_i}$ are in $G$, then the pair $\{u_i, u_j\}$ is called a \emph{brace}.

We consider two versions of bounded budget network creation games, which differ in the definition of the cost function.
In the \emph{SUM} version, the cost incurred to each vertex is the sum of its distances to other vertices, that is, for each $u\in V(G)$,
$$c_{SUM}(u) = \sum_{v\in V(G)}{\dist(u, v)}.$$
By choosing $C_{\textrm{inf}}$ to be at least $n^2$, we ensure that for every vertex $u$, $c_{SUM}(u)$ decreases whenever
$u$ changes its strategy so that the number of vertices in its connected component is increased.
In the \emph{MAX} version, if $U(G)$ has $\kappa$ connected components,
then the cost incurred to each vertex is defined as
$$c_{MAX}(u) = \max \{\dist(u, v) : v \in V(G)\} + (\kappa-1) n^2.$$
The term $\max \{\dist(u, v) : v \in V(G)\}$ is simply the local diameter of $u$,
and the (artificial) term $(\kappa-1) n^2$ has been added to the cost function so that when the network is disconnected,
the vertices would have the incentive to decrease the number of connected components.

We say a vertex is playing its \emph{best response} if it cannot decrease its cost by changing its strategy while the other vertices' strategies are fixed.
Notice that a vertex does not need to have a unique best response.
A strategy profile is called a \emph{(pure) Nash equilibrium} if in that profile, all players are playing their best responses.
In this case the graph $G$ is said to be a \emph{Nash equilibrium graph}, or simply an \emph{equilibrium graph} for $(b_1,b_2,\dots,b_n)$-BG.
The \emph{price of stability} of $(b_1,\dots,b_n)$-BG is defined as
$$\frac{\min \{ \diam(G^{NE}) : G^{NE} \mathrm{\ is\ an\ equilibrium\ for\ } (b_1,\dots,b_n)\mathrm{-BG} \}} {\min \{ \diam(G) : G \mathrm{\ is\ a\ realization\ of\ } (b_1,\dots,b_n)\mathrm{-BG} \}}.$$
And the \emph{price of anarchy} of $(b_1,\dots,b_n)$-BG is defined as
$$\frac{\max \{ \diam(G^{NE}) : G^{NE} \mathrm{\ is\ an\ equilibrium\ for\ } (b_1,\dots,b_n)\mathrm{-BG} \}} {\min \{ \diam(G) : G \mathrm{\ is\ a\ realization\ of\ } (b_1,\dots,b_n)\mathrm{-BG} \}}.$$
The price of anarchy measures how the efficiency of the network degrades due to selfish behavior of its agents.
In this paper, networks with smaller diameter are considered more efficient,
and the social cost of a strategy profile is the diameter of the constructed graph.
It is worth noting that, if
$$b_1+b_2+\dots+b_n \geq n-1,$$
then the denominator of both of these fractions is $O(1)$ (see Theorem~\ref{thm:ne-existence}),
and the main challenge is to evaluate the nominators,
i.e.~the diameters of equilibrium graphs.
In this case, the undirected underlying graphs of equilibria are connected (see Lemma~\ref{lem:connected}).
Instances with
$$b_1+b_2+\dots+b_n < n-1$$
are not very interesting, since the constructed networks are always disconnected
and both of the fractions are equal to 1.
In this paper, all logarithms are in base 2,
and generally we do not try to optimize the constant factors.

\subsection{Our results and organization of the paper}
We study various properties of equilibrium graphs for bounded budget network creation games.
In particular, we analyze the diameter of equilibrium graphs in various special cases,
which results in bounds for the price of anarchy in these cases.
First, in Section~\ref{sec:existence}, we prove that for every nonnegative sequence
$b_1,\dots,b_n$, the game $(b_{1},b_{2},\dots,b_{n})$-BG has a Nash equilibrium in both versions,
and that the price of stability of this game is $O(1)$.
In Section~\ref{sec:trees}, we study the price of anarchy in extreme instances
in which the sum of budgets is $n-1$.
Note that this is the smallest sum needed to have a connected network.
For these instances, we prove that the price of anarchy is $\Theta(n)$ and $\Theta(\log n)$
in MAX and SUM versions, respectively.

In Section~\ref{sec:unitbudget}, we prove that the price of anarchy in instances
in which the budget of each players is equal to 1, is $\Theta(1)$ in either version.
One may expect that further increasing the players' budgets
will result in equilibrium graphs with even smaller diameters.
In Section~\ref{sec:lowermax}, we show that, interestingly,
this is not true and there exist instances
in which all players have positive budgets and the price of anarchy is $\Omega(\sqrt{\log n})$ in the MAX version.
Such a counter-intuitive behavior had been observed previously in algorithmic game theory,
and perhaps the most famous example, known as the \emph{Braess's paradox},
is given by Braess, Nagurney, and Wakolbinger~\cite{nagurney} in network routing games.
They observed that adding extra capacity to the links of a road network,
which is used by several selfish commuters,
in some cases might reduce the overall performance.

In Section~\ref{sec:sumupper}, we give a general upper bound of
$2^{O({\sqrt{\log n}})}$ for the price of anarchy in the SUM version.
Our bounds on the price of anarchy in various classes of instances
in both versions are summarized in Table~\ref{t:results}.
In Section~\ref{sec:connectivity}, we consider the connectivity of equilibrium graphs,
and prove that if the budget of each player is at least $k$,
then every equilibrium graph in the SUM version with diameter larger than 3 is $k$-connected.
We conclude with a discussion of our results and proposing some open problems in Section~\ref{sec:conclusion}.

\begin{table}
\begin{center}
 \caption{Our bounds on the price of anarchy in various classes of instances}
\begin{tabular}{|ccc|}
\hline & \textbf{MAX} & \textbf{SUM}  \\
\hline \textbf{Trees}                &  $\Theta(n)$            & $\Theta(\log n)$  \\
\hline \textbf{All-Unit Budgets}     & $\Theta(1)$             & $\Theta(1)$ \\
\hline \textbf{All-Positive Budgets} & $\Omega(\sqrt{\log n})$ & $2^{O({{\sqrt{\log n}})}}$ \\
\hline \textbf{General}              & $\Theta(n)$             & $2^{O({{\sqrt{\log n}})}}$ \\
\hline
\end{tabular}
 \label{t:results}
\end{center}
\end{table}


\section{Existence of Nash equilibria}
\label{sec:existence}

In this section,
we prove that for every nonnegative $b_1, b_2, \dots, b_n$, Nash equilibria exist for both MAX and SUM versions of $(b_1, b_2, \dots, b_n)$-BG.
Moreover, we prove that the price of stability of this game is $O(1)$.
Before proving the main result of this section,
we show that computing a player's best response in bounded budget network creation games is an intractable problem.

\begin{theorem}
The problem of finding a player's best response in both MAX and SUM versions of bounded budget network creation games is NP-Hard.
\end{theorem}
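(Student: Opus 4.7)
The plan is to reduce \textsc{Dominating Set} --- which is NP-hard even on connected graphs --- to the best-response problem in both versions. Given an instance $(H, k)$ with $H$ a connected graph on vertices $v_1, \ldots, v_n$, I build in polynomial time a bounded-budget game on $n+1$ players $u, v_1, \ldots, v_n$. Orient the edges of $H$ arbitrarily; then set each $b_{v_i}$ equal to the out-degree of $v_i$ in this orientation and fix $v_i$'s strategy to be its out-neighbors. The arcs owned by $v_1, \ldots, v_n$ thus realize the graph $H$ on $\{v_1,\dots,v_n\}$. Set $b_u = k$, so $u$'s strategy is a $k$-subset $S \subseteq V(H)$. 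Because $H$ is connected and $|S| \geq 1$, the underlying graph is always connected, and so the disconnection penalty never activates in either cost function.

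For the MAX version, the local diameter of $u$ equals
$$1 + \max_{v \in V(H)} \min_{s \in S} \dist_H(s, v),$$
which is at most $2$ iff every $v \in V(H)$ lies at $H$-distance at most $1$ from $S$, i.e., $S$ is a dominating set of $H$. Assuming $k < n$, the local diameter is always at least $2$, so $u$'s best-response cost equals $2$ precisely when $H$ has a dominating set of size $k$. For the SUM version, a direct summation gives
$$c_{SUM}(u) \;=\; n + \sum_{v \in V(H)} \min_{s \in S} \dist_H(s, v).$$
If $S$ is a dominating set this equals $2n - k$; otherwise some vertex outside $S$ contributes at least $2$ instead of $1$ to the inner sum, forcing $c_{SUM}(u) \geq 2n - k + 1$. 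In both versions an algorithm for the best response of $u$ therefore decides \textsc{Dominating Set}, giving the claimed NP-hardness.

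The main subtlety I anticipate is controlling the disconnection terms --- the $(\kappa-1)n^2$ summand in $c_{MAX}$ and the $n^2$ convention used for disconnected pairs in $c_{SUM}$ --- which could otherwise swamp the dominating-set signal if some vertex were unreachable from $u$. Restricting the source instance to connected graphs $H$, and only allowing $u$'s budget to be spent inside $V(H)$, keeps the underlying graph connected for every choice of $S$, and thereby reduces the analysis to the combinatorics of domination inside $H$.
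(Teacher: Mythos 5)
Your proof is correct and uses essentially the same construction as the paper: fix the other players' strategies to realize $H$ via an arbitrary orientation (budgets equal to out-degrees) and add one new player with budget $k$ whose best response is a $k$-subset minimizing the max/sum of distances. The only difference is that the paper reduces from \textsc{$k$-center} and \textsc{$k$-median} directly, whereas you reduce from \textsc{Dominating Set} (the standard source of hardness for those very problems), which makes your argument slightly more self-contained but is not a genuinely different route.
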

\begin{proof}
We reduce the {\sc $k$-center} problem to the problem of finding a player's best response in the MAX version of the game.
In the {\sc $k$-center} problem, a graph and a positive integer $k$ is given and the aim is to find a subset $S$ of $k$ vertices of
so as to minimize the maximum distance from a vertex to $S$, i.e.~we want to find
$$\min_{|S| = k}\ \max_{v\in V} \dist(v, S).$$

Assume that we are given an undirected graph $H$ with $n$ vertices, and we are supposed to find
an optimal solution to the {\sc $k$-center} problem.
Consider a directed graph $G$ such that $U(G) = H$, and a game $(b_1,b_2,\dots,b_{n},b_{n+1})$-BG,
where $b_i$ is the outdegree of the $i$-th vertex in $G$, and define $b_{n+1} = k$.
Now compute a best response of the $(n+1)$-th player in the MAX version of this game,
where the strategies of other players are realized by $G$.
A best response is clearly an optimal solution for the {\sc $k$-center} problem in $H$.
The proof is complete by noting that {\sc $k$-center} is NP-hard (see~\cite{kcenter} for instance).

By using exactly the same idea,
one can reduce the {\sc $k$-median} problem~(see \cite{kmedian} for the definition)
to the problem of finding a best response in the SUM version of the game.
Since the former problem is NP-hard, the latter one is NP-hard, too.
\end{proof}

For proving the main theorem of this section,
we need a lemma that gives a sufficient condition for guaranteeing that a vertex is playing its best response.

\begin{lemma} \label{l:equibconst}
Let $u$ be a vertex in a realization of a bounded budget network creation game.
If $c_{MAX}(u) \leq 2$ and $u$ is not contained in any brace,
or $c_{MAX}(u)=1$,
then $u$ is playing its best response in both MAX and SUM versions of the game.
\end{lemma}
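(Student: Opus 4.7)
The plan is to argue that, under either hypothesis, the vertex $u$ simply cannot reduce its cost by any change of its out-arcs, and to do this directly by bookkeeping of the neighborhood of $u$ in $U(G)$. The crucial observation is that when $u$ changes strategy, its set of in-neighbors $N^-(u)$ is determined by the other players and therefore stays fixed; only its out-neighbor set (of size $b_u$) changes. Write $N(u) := N^+(u)\cup N^-(u)$ for the current neighborhood in $U(G)$, and let $N'(u)$ denote the new neighborhood after $u$ switches to some strategy $S$ of size $b_u$; then $N'(u)\subseteq S\cup N^-(u)$, so $|N'(u)|\le b_u+|N^-(u)|$.

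First I would dispatch the case $c_{MAX}(u)=1$. Here every other vertex is at distance exactly $1$, so the MAX cost is already at its absolute minimum $1$ and the SUM cost equals $n-1$, the minimum possible in any connected $n$-vertex graph; any strategy change either keeps the graph connected (and so gives new MAX cost $\ge 1$ and new SUM cost $\ge n-1$) or disconnects it (and so adds at least $C_{\textrm{inf}}=n^2$ to the cost), so $u$ gains nothing.

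Next I would handle the remaining case $c_{MAX}(u)\le 2$ with $u$ in no brace. Then $N^+(u)\cap N^-(u)=\emptyset$, so $|N(u)|=b_u+|N^-(u)|$. If $c_{MAX}(u)=2$, pick any vertex $v\ne u$ with $\dist(u,v)=2$; such a $v$ lies outside $N(u)$, so $|N(u)|\le n-2$, and therefore $|N'(u)|\le n-2$ as well. For the MAX version this forces new local diameter $\ge 2$, matching the current cost. For the SUM version, since current distances from $u$ are all in $\{1,2\}$,
\[
c_{SUM}(u)\;=\;|N(u)|+2\bigl(n-1-|N(u)|\bigr)\;=\;2(n-1)-|N(u)|,
\]
while in any new strategy each vertex contributes at least $1$ if it is in $N'(u)$ and at least $2$ otherwise (or $n^2$ if it becomes disconnected from $u$), giving new SUM cost $\ge 2(n-1)-|N'(u)|\ge 2(n-1)-|N(u)|$, again no improvement. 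The subcase $c_{MAX}(u)=1$ inside this branch reduces to the first paragraph.

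I do not expect a real obstacle here; the proof is essentially a one-step counting argument whose only delicate point is remembering that (i) the in-neighbors of $u$ are beyond $u$'s control and (ii) disconnecting the graph is always catastrophic because of the $n^2$ penalty, which together rule out any sneaky improvement by sacrificing one part of the graph for a better connection to another.
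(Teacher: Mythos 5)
Your proof is correct and follows essentially the same argument as the paper: both hinge on the observation that since the in-neighbors of $u$ are fixed and its out-degree is fixed at $b_u$, under any strategy $u$ has at most $|N^+(u)|+|N^-(u)|$ vertices at distance one (using the no-brace hypothesis to see these sets are disjoint) and distance at least two to everything else, so neither cost can drop. Your version merely spells out the counting a bit more explicitly than the paper does.
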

\begin{proof}
If $c_{MAX}(u) = 1$, then it is clear that $u$ cannot decrease its cost.
Otherwise, let $V^-$ be the set of vertices that have an arc to $u$ and $V^+$ be the set of vertices that have an arc from $u$.
Since $u$ is not an endpoint of any brace, $V^+\cap V^- = \emptyset$.
It is easy to verify that no matter how $u$ plays, it always has distance one to at most  $|V^+| + |V^-|$ vertices,
and distance at least two to the rest of the vertices.
Therefore, regardless of how $u$ plays,
its cost in the MAX version will be at least $2$, and
its cost in the SUM version will be at least $2(n - 1 - |V^-| - |V^+|) + |V^+| + |V^-|$.
Hence $u$ is already playing its best response.
\end{proof}

We are now ready to prove the main theorem of this section.

\begin{theorem}
\label{thm:ne-existence}
For every nonnegative $b_1, b_2, \dots, b_n$,
Nash equilibria exist for both MAX and SUM versions of $(b_1,\dots,b_n)$-BG.
Moreover, the price of stability of this game is $O(1)$.
\end{theorem}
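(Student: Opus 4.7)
The plan is to exhibit, for every profile $(b_1,\dots,b_n)$, an explicit realization $G$ that is a Nash equilibrium of diameter $O(1)$ in the connected regime; this simultaneously establishes existence and the $O(1)$ bound on the price of stability. The entire verification of the equilibrium property will be routed through Lemma~\ref{l:equibconst}: it suffices to ensure that every vertex either has local diameter $1$, or has local diameter at most $2$ and lies in no brace.

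The sparse case $\sum_i b_i<n-1$ is dispatched first. Every realization is then disconnected (the underlying simple graph has fewer than $n-1$ edges), so every vertex pays the $\Omega(n^2)$ disconnection penalty, and both the minimum diameter over realizations and the diameter of any equilibrium are $\Theta(n^2)$, giving price of stability $1$. A concrete equilibrium is obtained by concentrating as many vertices as possible into a single component centered at a zero-budget vertex and verifying via a short case analysis that no unilateral deviation can strictly increase the size of the deviator's component.

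The substantive case is $\sum_i b_i\geq n-1$. Let $c$ be a vertex of maximum budget and set $Z=\{v:b_v=0\}$. Assuming $b_c\geq|Z|$, I build a ``star-with-extras'' around $c$: partition $V\setminus\{c\}$ into an out-set $O$ of size $b_c$ containing all of $Z$ and an in-set $I=V\setminus(\{c\}\cup O)$; vertex $c$ directs its budget entirely at $O$, and each vertex of $I$ (which is forced to be free, since $Z\subseteq O$) spends one unit of its budget on an arc to $c$. This skeleton uses exactly $n-1$ arcs, makes every vertex adjacent to $c$, and creates no brace incident to any non-hub vertex. The remaining $\sum_i b_i-(n-1)$ arcs are placed among non-hub vertices by fixing a total order on $V\setminus\{c\}$ and only pointing residual arcs at later-ordered vertices, which automatically prevents non-hub braces. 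Every non-hub vertex then has local diameter at most $2$ and no incident brace, while $c$ has local diameter $1$; Lemma~\ref{l:equibconst} now certifies best response for every vertex, and the graph has diameter $2$.

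The main obstacle is the subcase $b_c<|Z|$, where a single hub cannot dominate $Z$. I would take a constant-size set $H$ of highest-budget vertices, link them by a short spanning structure (e.g.\ a directed path), and distribute their remaining out-arcs so that every $z\in Z$ is an out-neighbour of some hub in $H$; a short counting argument using $\sum_i b_i\geq n-1$ and the ordering of the budgets shows that $|H|=O(1)$ suffices, after which every free vertex outside $H$ points to some hub. Lemma~\ref{l:equibconst} again certifies best response for non-hubs, while each hub is verified directly by noting that any deviation either isolates a zero-budget vertex it was covering (incurring the $\Omega(n^2)$ penalty) or cannot decrease its local diameter. The constructed equilibrium then has diameter $O(1)$, and since every connected realization has diameter at least $1$, the price of stability is $O(1)$.
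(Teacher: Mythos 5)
Your overall strategy --- an explicit constant-diameter construction certified vertex-by-vertex through Lemma~\ref{l:equibconst}, split according to whether the maximum budget covers the zero-budget set --- is the same as the paper's, and your first subcase ($b_c\geq |Z|$) essentially reproduces the paper's Case~1. The genuine gap is in the subcase $b_c<|Z|$, which is the crux of the theorem. Your claim that a hub set $H$ of highest-budget vertices with $|H|=O(1)$ suffices is false: take $n/2$ zero-budget vertices and $n/2$ vertices of budget $2$, so $\sum b_i=n\geq n-1$ and $b_c=2<|Z|=n/2$; since every zero-budget vertex needs an incoming arc from a hub, any covering hub set has size at least $|Z|/b_c=n/4$. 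No counting argument can rescue $|H|=O(1)$. Worse, linking the hubs ``by a directed path'' would then give diameter $\Theta(|H|)$, destroying the $O(1)$ bound; the paper instead stars all hubs through the single maximum-budget vertex $v_n$, keeping the diameter at $4$. Finally, in this subcase Lemma~\ref{l:equibconst} does \emph{not} certify the non-hub positive-budget vertices: a vertex adjacent only to $v_n$ and a few hubs has local diameter $3$ (it is at distance $3$ from a zero-budget vertex hanging off a hub it does not reach), so the lemma's hypothesis fails. The paper needs a direct argument here, resting on the observation (Claim~\ref{clm:only}) that each hub is the \emph{unique} neighbour of some zero-budget vertex, so a low-budget vertex cannot get within distance $2$ of all of them no matter how it deviates. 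Your proposal contains no substitute for this step.

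Two smaller issues. In the $b_c\geq|Z|$ subcase, the ``total order'' trick does not automatically prevent braces: a late-ordered vertex whose residual budget exceeds the number of later vertices is forced to point backwards (or at $c$), and can end up in a brace with local diameter $2$, where Lemma~\ref{l:equibconst} is silent and the vertex may in fact have a profitable deviation; the paper repairs this with an explicit brace-swapping pass. In the sparse case $\sum b_i<n-1$, checking only that ``no deviation increases the deviator's component size'' is not enough in the SUM version, since a deviation preserving the component size could still shorten distances inside the component; the paper instead requires the unique nontrivial component to be an equilibrium of the induced sub-game.
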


\begin{proof}
Let $z$ be the number of players with zero budget, and let $\sigma = b_1 + b_2 + \dots + b_n$.
Without loss of generality, assume that the $b_1,\dots,b_n$ are in nondecreasing order, that is,
$$0=b_1 = b_2 = \dots = b_z < b_{z+1} \leq b_{z+2} \dots \leq b_{n-1} \leq b_n.$$
We consider three cases.

\noindent {\bf Case 1. $\sigma \geq n-1$ and $b_n \geq z$}\\
We provide an algorithm to build a graph $G$ all of whose vertices satisfy the conditions of Lemma~\ref{l:equibconst}.
Thus $G$ is an equilibrium graph  for both versions.
Moreover, $G$ has diameter $O(1)$, which shows that the price of stability is $O(1)$ for instances satisfying $\sigma \geq n-1$ and $b_n \geq z$.
In this case we can use a single vertex to link to zero-budget vertices and keep $G$ connected.

The graph $G$ has vertex set $\{v_1,\dots,v_n\}$ and is initially empty.
Add the arcs $\arc{v_nv_1}$, $\arc{v_n v_2}$, $\dots$,$\arc{v_n v_{b_n}}$ and then the arcs
 $\arc{v_{b_n+1} v_n}$, $\arc{v_{b_n+2} v_n}$, $\dots$ ,$\arc{v_{n-1} v_n}$ to $G$.
Note that $G$ has diameter 2 at this point, but there might be vertices whose outdegrees are less than their budgets.
If $u$ is such a vertex, then add arcs from $u$ to arbitrary vertices until its outdegree equals its budget.
This operation clearly does not increase the diameter, but may create braces.
For every brace $\{u,v\}$ such that $u$ has local diameter two and there exists a vertex $w$ not adjacent to $u$,
replace the arc $\arc{u v}$ with $\arc{uw}$.
This can be done only a finite number of times, since after every replacement the number of braces decreases.
It is easy to see that the vertices of the obtained graph have the properties of Lemma \ref{l:equibconst}
and thus this graph is an equilibrium graph.

\noindent{\bf Case 2. $\sigma \geq n-1$ and $b_n < z$}\\
As in Case 1 (but using a more complicated construction), we build a graph which is an equilibrium graph
in both versions, and has diameter $O(1)$.
In this case we cannot use a single vertex to link to zero-budget vertices and keep the graph connected,
hence we should use several vertices to do this.
We would like to use as few vertices as possible, so we will focus on vertices with large degrees.
Let $t$ be the largest index with
\begin{equation*}
\label{eq:t}
b_n + b_{n-1} + \dots + b_t \geq z + n-t.
\end{equation*}
First, note that such a $t$ exists and is larger than $z$, since
$$b_n + b_{n-1}+  \dots + b_{z+1} = \sigma \geq n-1  = z + n - (z+1).$$
Second, note that $t<n$ since $b_n < z = z + n - n$.
Define
$$A = \{v_1, v_2, \dots, v_z\},\ B = \{v_{z+1}, v_{z+2}, \dots, v_t\},\ \mathrm{and}\ C = \{v_{t+1}, v_{t+2}, \dots, v_{n-1}\}.$$
Note that $A$ is the set of zero-budget vertices,
and $\{v_t\} \cup C \cup \{v_n\}$ is the set of vertices that will
connect the set $A$ to the rest of the graph.

We start with an empty graph with vertex set $A\cup B\cup C \cup \{v_n\}$,
and add arcs to it as described in the four following phases, until the outdegree of each vertex becomes equal to its budget.
A concrete example is illustrated in Figure~\ref{f:equilibriaexist}, in which
$n=22$, $z=16$, and $t=19$.

\begin{figure}
\begin{center}
\includegraphics[width=10cm]{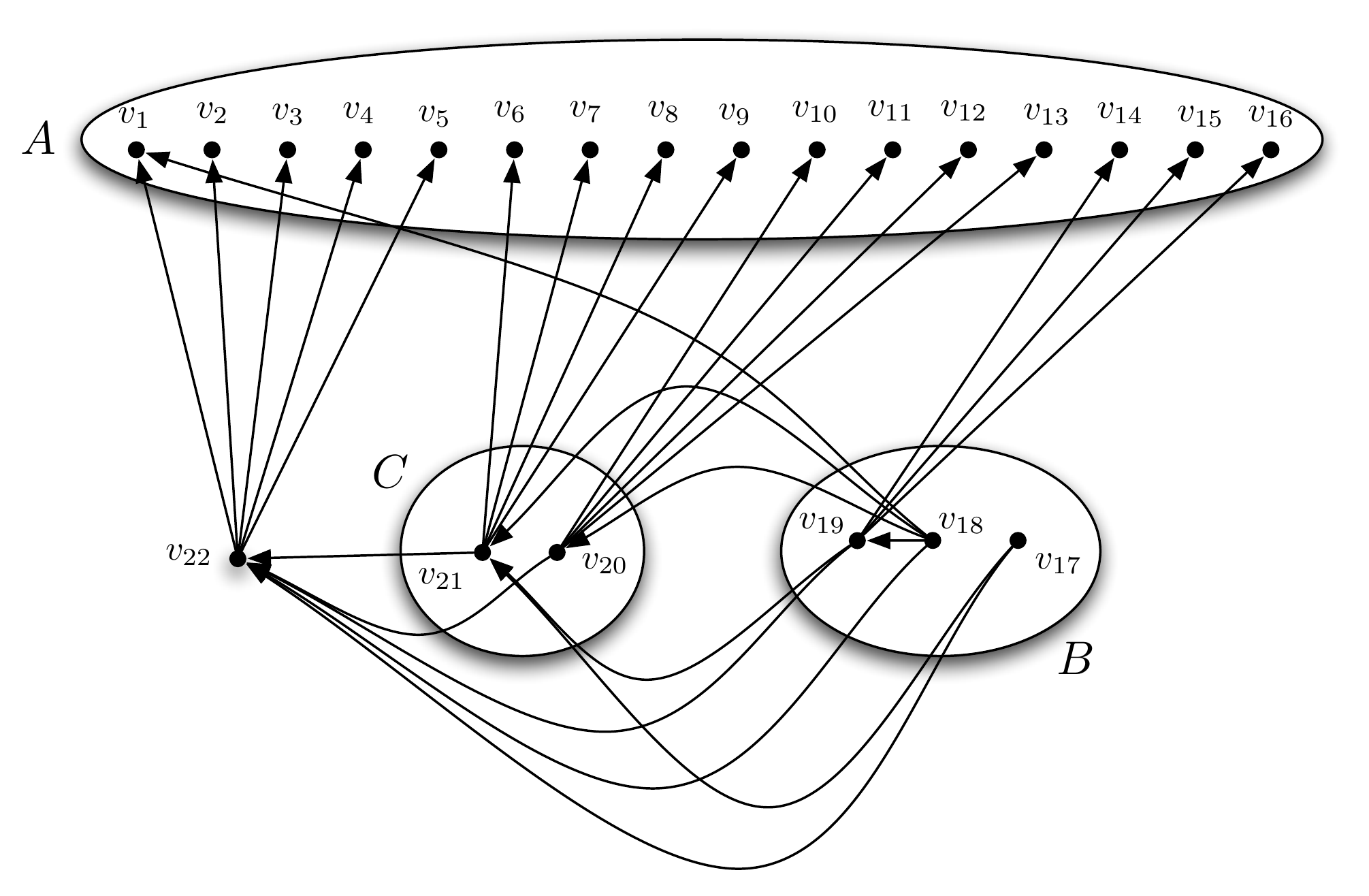}
\end{center}
\caption{Case 2 in the proof of Theorem~\ref{thm:ne-existence} }
\label{f:equilibriaexist}
\end{figure}

\begin{enumerate}
\item Add an arc from every vertex in $B\cup C$ to $v_n$
    (the arcs $\arc{v_{17} v_{22}},\arc{v_{18}v_{22}},\dots,\arc{v_{21}v_{22}}$ in Figure~\ref{f:equilibriaexist}).

\item Add arcs from $\{v_n\}\cup C \cup \{v_t\}$ to $A$:
    \begin{itemize}
    \item First, add $b_n$ arcs from $v_n$ to the first $b_n$ vertices of $A$
    (the arcs $\arc{v_{22}v_{1}},\arc{v_{22}v_2},\dots,\arc{v_{22}v_5}$ in Figure~\ref{f:equilibriaexist});
    \item Second, add $b_{n-1}-1$ arcs from $v_{n-1}$ to the next $b_{n-1}-1$ vertices of $A$
    (the arcs $\arc{v_{21}v_{6}},\dots,\arc{v_{21}v_9}$ in Figure~\ref{f:equilibriaexist});
    \item Third, add $b_{n-2}-1$ arcs from $v_{n-2}$ to the next $b_{n-2}-1$ vertices of $A$;
    (the arcs $\arc{v_{20}v_{10}},\dots,\arc{v_{20}v_{13}}$ in Figure~\ref{f:equilibriaexist});
    \item Continue similarly until you add $b_{t+1}-1$ arcs from $v_{t+1}$ to the next $b_{t+1}-1$ vertices of $A$;
    \item At last, add $s$ arcs from $v_t$ to the last $s$ vertices of $A$, where
    $$ s = z+n - (t + 1) - (b_n + \dots + b_{t+1})$$
    is positive by the definition of $t$.
    (the arcs $\arc{v_{19}v_{14}},\arc{v_{19}v_{15}},\arc{v_{19}v_{16}}$ in Figure~\ref{f:equilibriaexist}).
    \end{itemize}
    When this phase is completed, every vertex in $A$ has exactly one incoming arc.
    Moreover, the graph is a tree at this stage, and since $v_n$ has local diameter 2,
    its diameter is at most 4.

\item Add arcs from $B$ to $C\cup\{v_t\}$:
    for every vertex $u$ in $B$ whose outdegree is less than its budget, add arcs from $u$ to vertices in $C\cup\{v_t\}$
    in reverse order, i.e.~add the arcs $\arc{uv_{n-1}}, \arc{uv_{n-2}}$ and so on, until either arcs to all vertices
    of $C\cup\{v_t\}$ have been added, or the outdegree of $u$ equals its budget
    (the arcs $\arc{v_{17} v_{21}},\arc{v_{18}v_{21}},\arc{v_{18}v_{20}},\arc{v_{18}v_{19}},\arc{v_{19}v_{21}}$ in Figure~\ref{f:equilibriaexist}).

\item Add arcs from $B$ to $A$:
    for every vertex $u$ in $B$ whose outdegree is still less than its budget, add arcs from $u$ to vertices in $A$
    in order, i.e.~add the arcs $\arc{uv_1}, \arc{uv_2}$ and so on, until the outdegree of $u$ equals its budget
    (the arc $\arc{v_{18}v_1}$ in Figure~\ref{f:equilibriaexist}).

\end{enumerate}

When phase 4 is completed, for every $u \in B$, since the budget of $u$ is not more than the budget of $v_n$,
the set of neighbors of $u$ in $A$ is a subset of the set of neighbors of $v_n$ in $A$.
Therefore, every vertex in $A$ that is not adjacent to
$v_n$ has only one neighbor, which is in $C\cup\{v_t\}$.
Let $\arc{wx}$ be an arc from $C$ to $A$.
This arc could have been added in phase 2 only, so $x$ is not a neighbor of $v_n$.
Thus we have the following.

\begin{claim}
\label{clm:only}
For every arc $\arc{wx}$ from $C$ to $A$, $w$ is the only neighbor of $x$.
\end{claim}


Now, we prove that every vertex is playing its best response,
concluding that the obtained graph $G$ is an equilibrium graph.
This also implies that the price of stability of this game is $O(1)$ in this case,
since $G$ has diameter at most 4 when phase 2 finishes.
Observe that we create no brace in our construction.
Vertices in $A$ are obviously playing their best strategies as their budgets are zero.
Since $v_n$ has local diameter two, it is playing its best response by Lemma~\ref{l:equibconst}.

Let $u$ be a vertex in $C$ and we need to show that $u$ is playing its best response.
Every outgoing arc from $u$ is either going to $v_n$ or to some vertex in $A$.
The latter cannot be removed by Claim~\ref{clm:only}.
It is also easy to verify that $u$ cannot decrease its cost by removing the arc $\arc{uv_n}$
and adding an arc to another vertex.

Let $u$ be a vertex in $B$.
If in phase 4 some outgoing arcs from $u$ have been added,
then in phase 3, $u$ has already been joined to all vertices in $C\cup\{v_t\}$ and so has local diameter two.
Thus in this case, vertex $u$ satisfies the conditions of Lemma~\ref{l:equibconst} and is playing its best response.
Otherwise, since $u$ is adjacent to $v_n$, it has local diameter three.
Assume that in the beginning of phase 3, the budget of $u$ minus its outdegree was $p$.
Note that $p < |C|+1$.
First, it is easy to see that $u$ has no incentive to replace its arc $\arc{u v_n}$ with any other arc.
For any $w\in C$, at least one arc was added from $w$ to $A$ in phase 2;
so by Claim~\ref{clm:only}, there is a vertex $x\in A$ such that $w$ is its only neighbor.
Therefore, $u$ cannot make its local diameter less than 3 and so it is playing its best response in the MAX version.
Also, in the SUM version, it is easy to verify that the best strategy for $u$ is
to be adjacent to the vertices with largest degrees, i.e.~$v_{n-1}, \dots, v_{n-p}$.

\noindent {\bf Case 3. $\sigma < n-1$}\\
Let $m$ be the smallest positive integer that satisfies
$$b_m + b_{m+1} + \dots + b_{n} \geq n-m.$$
Clearly $1 <  m \leq n$ and $b_1 = b_2 = \dots = b_{m-1} = 0.$
Let $G$ be a graph with vertex set $\{v_1,v_2,\dots,v_n\}$
such that the subgraph induced by $\{v_{m},v_{m+1},\dots,v_n\}$
is an equilibrium graph for $(b_{m},b_{m+1},\dots,b_n)$-BG in the SUM version and there is no other edge in $G$.
Then it is easy to verify that $G$ is an equilibrium graph for $(b_1,b_2,\dots,b_n)$-BG in both versions.
Moreover, in this case any realization of $(b_1,\dots,b_n)$-BG is disconnected and has diameter $n^2$,
which shows that the price of stability is 1.
\end{proof}


\section{The diameter of equilibrium trees}
\label{sec:trees}
If the sum of players' budgets is less than $n-1$,
then any realization of the bounded budget network creation game is disconnected and has diameter $n^2$.
So, the smallest interesting instances of the game
are those in which the players' budgets add up to $n-1$.

\begin{lemma}
\label{lem:connected}
For any nonnegative $b_1, \ldots, b_n$ for which $\sum_{i=1}^n b_i \geq n-1$,
the underlying graphs of Nash equilibria of $(b_1,\ldots, b_n)$-BG are connected.
\end{lemma}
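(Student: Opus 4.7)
The plan is to assume, for contradiction, that some Nash equilibrium $G$ has $U(G)$ with $\kappa\geq 2$ connected components, and then to exhibit a single-arc swap by some player that strictly decreases its cost in either version. Write $\sigma=\sum_i b_i\geq n-1$ for the total number of arcs of $G$; in particular $\sigma\geq 1$, so $n\geq 2$.

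The crux is a short counting argument showing that $U(G)$, viewed as a multigraph, must contain a cycle. Each arc of $G$ contributes exactly one edge of $U(G)$ (a brace $\{u,w\}$ contributes a 2-cycle), so $U(G)$ has $\sigma\geq n-1$ edges counted with multiplicity; an acyclic multigraph on $n$ vertices with $\kappa$ components is a simple forest and has at most $n-\kappa\leq n-2$ edges, contradicting the hypothesis. Let $uw$ be an edge of $U(G)$ lying on a cycle and let $\arc{uw}\in G$ be an arc realizing (one copy of) it; its owner $u$ has $b_u\geq 1$. Now choose any vertex $v$ in a component of $U(G)$ different from that of $u$ (possible since $\kappa\geq 2$), and consider the deviation in which $u$ replaces $\arc{uw}$ by $\arc{uv}$. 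This is a valid strategy since $v\neq u$ and $v$ is not already an out-neighbor of $u$. Because $uw$ lies on a cycle, removing one copy does not change any component, and then the new arc $\arc{uv}$ merges the components of $u$ and $v$, so the resulting graph has exactly $\kappa-1$ components.

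A direct cost comparison completes the argument. In the MAX version, $u$'s old cost is $\kappa\,n^2$ (its local diameter equals $n^2$ since $u$ was disconnected from some vertex), while the new cost is at most $(\kappa-1)n^2$, a strict decrease of at least $n^2$. In the SUM version, each vertex of $v$'s old component contributed $n^2$ to $u$'s old cost but at most $n-1$ afterwards, while the sum of $u$'s distances within its augmented component is at most $(n-1)^2$; a short computation gives a net change of at most $(n-1)^2-n^2=-(2n-1)<0$. The main obstacle is precisely the counting step above: a careless choice of arc might be a bridge whose removal splits $u$'s component, so that the swap merely rearranges components instead of reducing their number. The bound $\sigma\geq n-1$ is exactly tight for guaranteeing a cycle in the underlying multigraph, and once a ``safe'' arc on such a cycle is identified, the remaining cost bookkeeping is routine thanks to the choice $C_{\textrm{inf}}=n^2$.
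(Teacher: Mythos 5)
Your proposal is correct and follows essentially the same route as the paper's proof: locate a cycle in the (multi)graph $U(G)$, swap one of its arcs to a vertex in another component, and observe that reducing the number of components strictly lowers the cost in both versions. You merely spell out two steps the paper leaves implicit — the edge-counting argument ($\sigma\geq n-1$ edges versus at most $n-\kappa\leq n-2$ in an acyclic disconnected multigraph) that guarantees the cycle exists, and the explicit cost bookkeeping — both of which check out.
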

\begin{proof}
Let $G$ be an equilibrium graph for $(b_1,\ldots, b_n)$-BG, where $\sum_{i=1}^n b_i \geq n-1$.
If $G$ is not connected, then it has a cycle $C$, where a brace is also considered a cycle.
Pick a vertex $v$ from $C$ that owns at least one arc $\arc{vw}$ of $C$,
and let $u$ be a vertex that is in a different component.
If $v$ replaces $\arc{vw}$ with $\arc{vu}$, then the number of vertices in its connected component increases,
and the total number of connected components decreases.
Thus by definition of cost functions (and since $C_{\textrm{inf}} = n^2$), the cost of $v$ decreases in either version.
So $v$ is not playing its best response in $G$, i.e.~$G$ is not an equilibrium graph, which is a contradiction.
\end{proof}


When $\sum_{i=1}^n b_i = n-1$, it can be easily seen that every equilibrium graph is a tree.
We write {\emph {Tree-BG}} for the set of instances of bounded budget network creation games
in which the sum of budgets equals $n-1$.

In this section, we study the price of anarchy of games in Tree-BG.
We prove that in the MAX version, there exist equilibrium graphs with diameter $\Theta(n)$,
so the price of anarchy is $\Theta(n)$.
In the SUM version, we prove that equilibrium graphs have diameter $O(\log n)$,
and this bound is asymptotically tight, so the price of anarchy is $\Theta(\log n)$.

\begin{theorem}
\label{thm:treemax}
In the MAX version, for infinitely many $n$,
there are Tree-BG instances that have equilibrium graphs with diameter $\Omega(n)$.
\end{theorem}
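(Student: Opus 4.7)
The plan is to exhibit, for each integer $\ell \geq 2$, an explicit Tree-BG instance on $n = 3\ell + 1$ vertices whose equilibrium graph is a spider-like tree of diameter $2\ell = \Omega(n)$. The construction is a \emph{spider with three equal legs}: let the vertex set consist of a center $c$, three hubs $u_1, u_2, u_3$ (each adjacent to $c$), and leg vertices $w_{i,j}$ for $i \in \{1,2,3\}$ and $1 \leq j \leq \ell - 1$, where $w_{i,1}$ is adjacent to $u_i$ and $w_{i,j}$ is adjacent to $w_{i,j+1}$. The ownership assignment is as follows: $c$ and every leg tip $w_{i,\ell-1}$ have budget $0$; each hub $u_i$ has budget $2$, owning $\arc{u_i c}$ and $\arc{u_i w_{i,1}}$; each non-tip leg vertex $w_{i,j}$ with $j \leq \ell-2$ has budget $1$, owning $\arc{w_{i,j} w_{i,j+1}}$. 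The budgets sum to $3\cdot 2 + 3(\ell-2) = 3\ell = n-1$, and the underlying tree has diameter $2\ell$, attained between $w_{i,\ell-1}$ and $w_{i',\ell-1}$ for any $i \ne i'$.

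Next, I verify that every vertex is playing a best response. The zero-budget vertices have no strategic choice. For an internal leg vertex $w_{i,j}$, removing its only owned arc $\arc{w_{i,j}w_{i,j+1}}$ detaches the tail $\{w_{i,j+1}, \ldots, w_{i,\ell-1}\}$; to avoid the disconnection penalty $C_{\textrm{inf}} = n^2$ (which dwarfs any conceivable local diameter), $w_{i,j}$ must reattach the arc inside this tail. But the eccentricity of $w_{i,j}$ is realized by a vertex $w_{i',\ell-1}$ in a \emph{different} leg, at distance $j + \ell + 1$ and reached through $u_i$ and $c$; swapping the arc does not alter this distance, so no deviation strictly improves $w_{i,j}$'s cost.

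The crux is the analysis of a hub $u_i$. When $u_i$ removes its two owned arcs, the graph splits into three components: the singleton $\{u_i\}$, the leg $L_i = \{w_{i,1}, \ldots, w_{i,\ell-1}\}$, and the remainder $R_i = \{c\} \cup \bigcup_{i' \ne i}(\{u_{i'}\} \cup \{w_{i',j}\}_{j})$. Avoiding the disconnection penalty forces $u_i$'s two new arcs to include exactly one into $L_i$ and one into $R_i$. A direct computation shows that $R_i$ is itself a path on $2\ell+1$ vertices centered at $c$, and $c$ is its unique $1$-center, with eccentricity exactly $\ell$ inside $R_i$ while every other vertex of $R_i$ has eccentricity at least $\ell+1$. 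Therefore the best attachment to $R_i$ is $c$, contributing $1+\ell$ to $u_i$'s eccentricity; the attachment to the path $L_i$ contributes at most $1+\lfloor(\ell-1)/2\rfloor < \ell+1$. Hence $u_i$'s minimum achievable eccentricity is exactly $\ell+1$, which matches the value $1+1+(\ell-1)=\ell+1$ that $u_i$'s current strategy $(c, w_{i,1})$ attains via the walk $u_i \to c \to u_{i'} \to \cdots \to w_{i',\ell-1}$; so $u_i$ is at a best response. The main difficulty is precisely this handling of simultaneous two-arc swaps by the hubs, in particular the verification that every attachment in $R_i$ other than $c$ strictly worsens the $R_i$-contribution. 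Once this is settled, the spider is a Nash equilibrium in the MAX version with diameter $\frac{2(n-1)}{3} = \Omega(n)$, establishing the theorem for all $n$ of the form $3\ell + 1$.
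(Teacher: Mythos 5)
Your construction is exactly the paper's (a three‑legged spider on $n=3\ell+1$ vertices with arcs oriented away from the center, the three hubs owning two arcs each and the leg tips and the center owning none), and your verification — tail vertices cannot improve because their eccentricity is realized in another leg, and the hubs' two-arc deviation is forced to split one arc into the detached leg and one into the remaining path of $2\ell+1$ vertices whose unique center is $c$ — is the same argument the paper gives. The proof is correct and essentially identical to the paper's.
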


\begin{proof}
Let $k$ be a positive integer, and let $n = 3k+1$.
Define
$$X = \{x_1, x_2, \dots, x_k\},\ Y=\{y_1, y_2, \dots, y_k\},\mathrm{\ and\ } Z = \{z_1, z_2, \dots, z_k\}.$$
Let $G$ be the tree with vertex set $X \cup Y \cup Z \cup \{w\}$ and arc set
$$\{\arc{x_1x_2},\dots,\arc{x_{k-1}x_k}, \arc{y_1 y_2},\dots, \arc{y_{k-1}y_k}, \arc{z_1z_2},\dots,\arc{z_{k-1}z_k}, \arc{x_1w}, \arc{y_1 w}, \arc{z_1w} \}.$$
See Figure~\ref{fig:treemax}.
Then $G$ is a realization of a Tree-BG instance and has diameter $2k = \Theta(n)$.
To complete the proof, we need to show that $G$ is an equilibrium graph.

\begin{figure}
\begin{center}
\includegraphics[width=8cm]{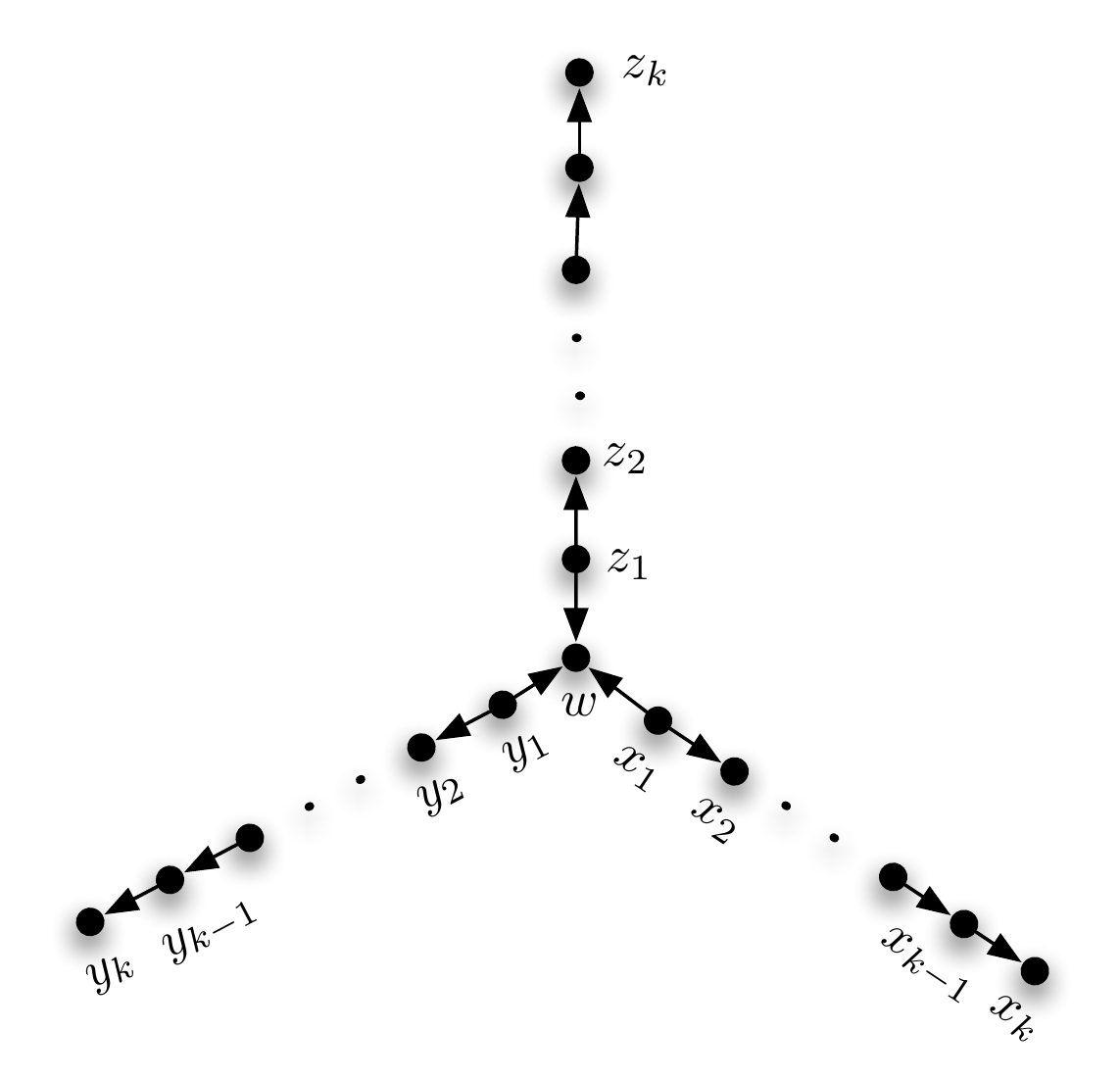}
\end{center}
\caption{Illustration of the proof of Theorem~\ref{thm:treemax} }
\label{fig:treemax}
\end{figure}

Since the vertex $w$ has no budget, it is playing its best response.
By symmetry, it is sufficient to prove that for all $1\leq i\leq k$, $x_i$ is playing its best response.
If $i > 1$, then $x_i$ has unit budget and currently has an arc to $x_{i+1}$.
If it replaces its outgoing arc with $\arc{x_i x_j}$ for some $j > i+1$, then its cost does not change at all.
If it replaces its outgoing arc with any other arc, then the graph gets disconnected,
and the cost of $x_i$ becomes $2n^2$.
Thus $x_i$ is playing its best response.

Now, let $i=1$. Note that $x_1$ has budget 2.
Clearly in order to keep the graph connected, $x_1$ should have arcs to one vertex from each of the two disjoint paths $x_2x_3\dots x_k$ and $z_kz_{k-1}\dots z_1wy_1y_2\dots y_k$.
Therefore, to minimize its local diameter,
its best response is to choose the middle of the second path (which is $w$) and an arbitrary vertex in the first path.
Thus $x_1$ is playing its best response, and the proof is complete.
\end{proof}

Next we show that the diameters of equilibrium graphs in the SUM version are much smaller.

\begin{theorem} \label{t:treepoa}
Any equilibrium graph for a Tree-BG instance in the SUM version has diameter $O(\log n)$.
\end{theorem}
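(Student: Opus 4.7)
The plan is to use a centroid-style halving argument on a diametral path. Let $T$ be an equilibrium graph for a Tree-BG instance in the SUM version; since $\sum_i b_i = n-1$ and Lemma~\ref{lem:connected} guarantees connectivity, $T$ is a tree. Let $P = v_0 v_1 \cdots v_d$ be a diametral path, so $d = \diam(T)$; because $P$ is a longest path in a tree, its endpoints $v_0$ and $v_d$ are leaves. Each edge $e_i = v_{i-1}v_i$ of $P$ is owned by exactly one of its endpoints; I call $e_i$ \emph{forward} if owned by $v_{i-1}$ and \emph{backward} otherwise.

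The key ingredient is a ``$1$-median property'': for every arc $\arc{uw}$ of $T$, if $T_w$ denotes the subtree containing $w$ after the edge $uw$ is removed, then $w$ minimizes $\sum_{x \in T_w} \dist_{T_w}(w', x)$ over all $w' \in T_w$. I would prove this by a single-arc swap: replacing $\arc{uw}$ by $\arc{uw'}$ with $w' \in T_w$ keeps the graph a tree (hence connected, so no $C_{\textrm{inf}} = n^2$ penalty is incurred), and changes $c_{SUM}(u)$ by exactly $\sum_{x \in T_w}[\dist_{T_w}(w', x) - \dist_{T_w}(w, x)]$, which must be nonnegative at equilibrium. The standard centroid consequence follows: every component of $T_w - w$ has size at most $|T_w|/2$, since otherwise stepping from $w$ to the neighbor in an oversized component $C$ would strictly decrease the sum by $|T_w| - 2|C| > 0$.

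I then apply this along $P$. For $1 \le i \le d$, let $T_i^+$ be the subtree containing $v_i, v_{i+1}, \ldots, v_d$ obtained by cutting $e_i$. Whenever $e_i$ is forward, the $1$-median property applied to $\arc{v_{i-1}v_i}$ yields $|T_{i+1}^+| \le |T_i^+|/2$, because $T_{i+1}^+$ is one of the components of $T_i^+ - v_i$. Since $v_0$ is a leaf, $|T_1^+| = n-1$, and since $v_d$ is a leaf, $|T_d^+| = 1$; chaining the halvings shows that at most $\log_2(n-1)$ of the edges $e_1, \ldots, e_{d-1}$ are forward. A symmetric argument with the mirror subtrees $T_i^-$ on the $v_0$ side bounds the number of backward edges in $e_2, \ldots, e_d$ by the same quantity. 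Since every edge of $P$ is forward or backward, $d \le 2\log_2(n-1) + 2 = O(\log n)$.

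The main point to verify carefully is that the swap from $\arc{uw}$ to $\arc{uw'}$ is a legitimate unilateral deviation with exactly the claimed effect: the budget of $u$ is unchanged, restricting $w' \in T_w$ preserves connectivity, and the only affected distances are those from $u$ to vertices in $T_w$. A minor bookkeeping subtlety is that the forward halving cannot be applied at $e_d$ (since $T_{d+1}^+$ does not exist) and the backward halving cannot be applied at $e_1$, which is precisely the source of the additive constant in the final bound.
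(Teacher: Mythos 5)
Your proof is correct and follows essentially the same route as the paper's: both rest on the observation that, at an edge of a diametral path owned by its left endpoint, an equilibrium swap forces the subtree beyond the next vertex to have at most half the size of the current one, and both then chain these halvings (handling the two arc directions separately, you by counting each direction, the paper by keeping only the majority direction) to conclude $d = O(\log n)$. Your $1$-median/centroid lemma is just a slightly more general way of obtaining exactly the inequality $|T_{i+1}^+| \le |T_i^+|/2$ that the paper extracts directly from the single swap $\arc{v_{i_j}v_{i_j+1}} \to \arc{v_{i_j}v_{i_j+2}}$ (stated there as $a(i_j+1) \ge \sum_{k>i_j+1} a(k)$); the only blemish is a sign slip where you say the sum decreases by $|T_w|-2|C|>0$ instead of $2|C|-|T_w|>0$.
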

\begin{proof}
Let $G$ be an equilibrium graph for a Tree-BG instance in the SUM version.
Let $d$ be the diameter of $G$ and $P = v_0v_1\dots v_d$ be a longest path in $G$.
At least half of the arcs in $P$ are in the same direction along $P$.
By symmetry, we may assume that these are the arcs $\arc{v_{i_1}v_{i_1+1}},\arc{v_{i_2}v_{i_2+1}},\dots,\arc{v_{i_{t}}v_{i_{t}+1}}$,
where $t\geq d/2$.
Every vertex not in $P$ is connected to $P$ via a unique path.
Let $A_i$ be the set of vertices that are connected to $P$ through $v_i$ (including $v_i$ itself), and let $a(i) = |A_i|$.
See Figure~\ref{f:treepoa} for an example.
Notice that all $a(i)$'s are positive since $v_i \in A_i$, and that all vertices appear in exactly one of the sets $A_{i}$.

\begin{figure}
\begin{center}
\includegraphics[width=8cm]{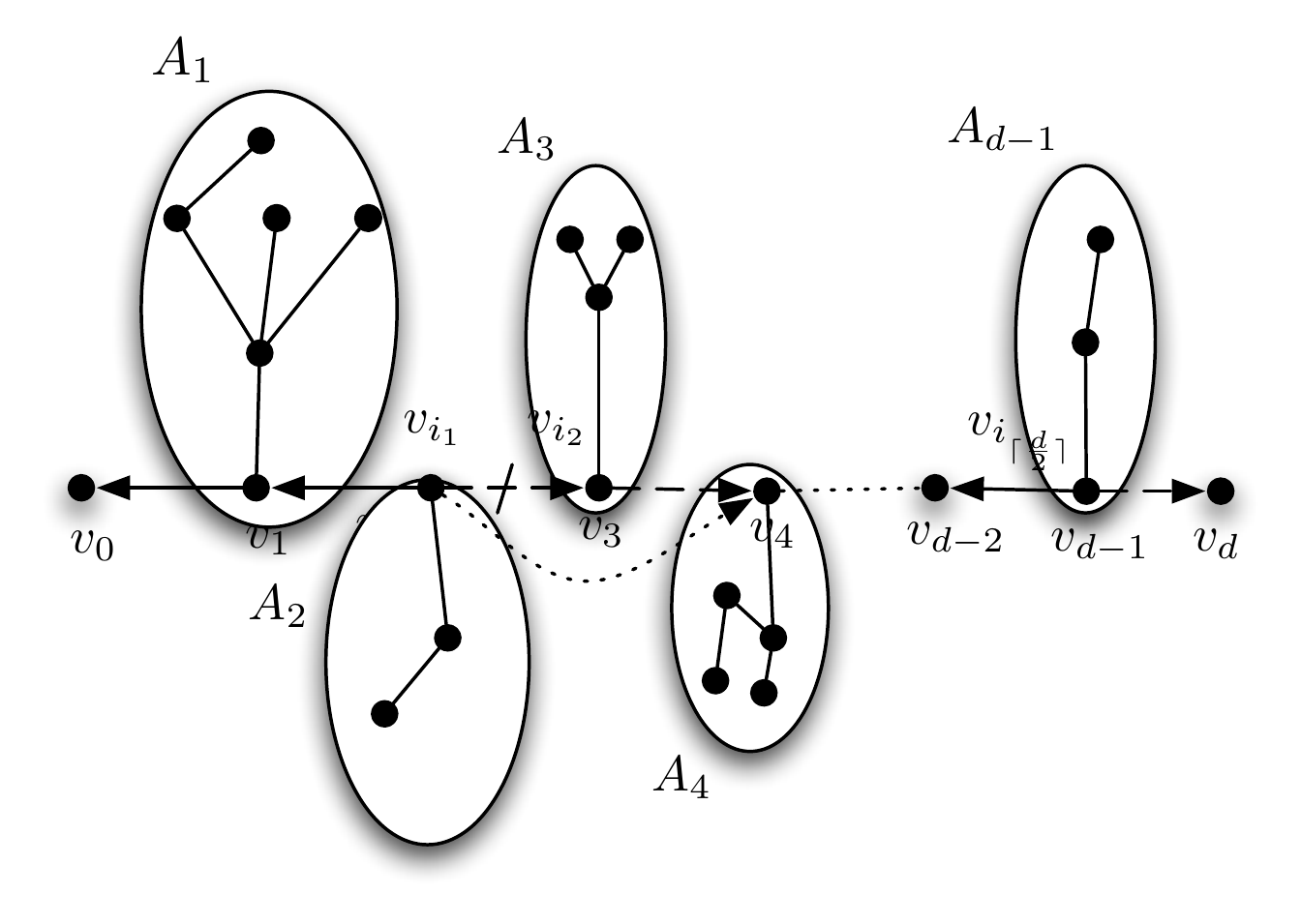}
\end{center}
\caption{Illustration of the proof of Theorem~\ref{t:treepoa}}
\label{f:treepoa}
\end{figure}

For $1\leq j< t$, if $v_{i_j}$ replaces its arc $\arc{v_{i_j} v_{i_j+1}}$ with the arc $\arc{v_{i_j} v_{i_j+2}}$,
then its distances to vertices in $A_{i_j+1}$ increase by one,
and its distances to vertices in $A_k$, $k > i_j+1$, decrease by one,
and its distances to other vertices do not change.
Since $v_{i_j}$ is playing its best response, we have
\begin{equation}
\label{eq:aij}
a(i_{j}+1) \geq \sum_{k=i_{j}+2}^{d}{a({k})} \geq \sum_{l=j+1}^{t}{a({i_{l}+1})} \quad \forall\, 1\leq j< t.
\end{equation}
Setting $j=t-1$ in (\ref{eq:aij}) gives
$$a(i_{t-1} + 1) \geq a(i_{t} + 1) \geq 1.$$
Setting $j=t-2$ in (\ref{eq:aij}) gives
$$a(i_{t-2} + 1) \geq a(i_{t-1} + 1) + a(i_{t} + 1) \geq 1 + 1 \geq 2.$$
Setting $j=t-3$ in (\ref{eq:aij}) gives
$$a(i_{t-3} + 1) \geq a(i_{t-2} + 1) + a(i_{t-1} + 1) + a(i_{t} + 1) \geq 2 + 1 + 1 \geq 4.$$
Continuing similarly, we find that $a(i_j+1) \geq 2^{t - j - 1}$ for $1\leq j < t$.
Therefore,
$$ \sum_{j=1}^{t-1} a(i_j+1) \geq \sum_{j=1}^{t-1}2^{t - j - 1} = 2 ^{t-1} - 1.$$

On the other hand, since all vertices appear in exactly one of the sets $A_{i}$, we have
$$n \geq \sum_{i=1}^{d} a(i) \geq \sum_{j=1}^{t-1} a(i_j+1) \geq 2^{t-1}-1.$$
Therefore $d \leq 2t = O(\log n)$.
\end{proof}

The bound $O(\log n)$ proved in the above theorem is tight up to constant factors,
as we next prove that there exist Tree-BG instances having equilibrium graphs with diameter $\Theta(\log n)$.

\begin{theorem} \label{t:lowerboundtree}
In the SUM version, for infinitely many $n$, there exist instances of Tree-BG that have
an equilibrium graph with diameter $\Theta(\log(n))$.
\end{theorem}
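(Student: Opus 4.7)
For each integer $d \geq 1$, let $G$ be the perfect binary tree of depth $d$ (so $n = 2^{d+1}-1$), with every edge oriented from parent to child. Leaves then have budget $0$, every internal vertex has budget $2$, the total budget is $n-1$, and $\diam(G) = 2d = \Theta(\log n)$, so $G$ realises a Tree-BG instance of the desired diameter. It remains to verify that $G$ is a Nash equilibrium in the SUM version.

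Leaves are trivially playing their best response. Fix an internal vertex $u$ with children $c_1, c_2$, and let $T_{c_i}$ denote the subtree rooted at $c_i$. The arc $\arc{u c_i}$ is the unique edge entering $T_{c_i}$, so any deviation in which $u$ removes $\arc{u c_i}$ without placing a new arc inside $T_{c_i}$ disconnects $u$ from $T_{c_i}$ and incurs the prohibitive penalty $C_{\textrm{inf}} = n^2$. Hence every strategy no worse than the present one retains exactly one arc $\arc{u t_i}$ with $t_i \in T_{c_i}$ for each $i \in \{1,2\}$. Because distances from $u$ to vertices outside $u$'s own subtree are routed through $u$'s parent (if any) and are unaffected by the choice of $t_1, t_2$, we obtain the decomposition
$$c_{SUM}(u) = \mathrm{const} + |T_{c_1}| + D_{T_{c_1}}(t_1) + |T_{c_2}| + D_{T_{c_2}}(t_2),$$
where $D_T(w) := \sum_{v \in T} \dist_T(w, v)$. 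Thus $u$ minimises its cost by choosing each $t_i$ to be a $1$-median of $T_{c_i}$.

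The heart of the proof is therefore the lemma that in a perfect binary tree $T$ of depth $h$ the root is the unique $1$-median. This follows from the identity $D_T(w) - D_T(p(w)) = |T| - 2|T_w|$, valid for any non-root $w$ with parent $p(w)$: every proper subtree of $T$ has size at most $2^h - 1 < |T|/2$, so the right-hand side is strictly positive, meaning $D_T$ strictly decreases along every path toward the root. Applied to $T_{c_1}$ and $T_{c_2}$, this forces $t_i = c_i$, so $u$'s current strategy is optimal. The only subtle point is that $u$ might change both of its arcs simultaneously, but since the cost decomposition above is additive across the two subtrees and each subproblem has a unique minimiser, the joint optimum is still $(c_1, c_2)$. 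Letting $d \to \infty$ produces equilibrium graphs of diameter $\Theta(\log n)$ for infinitely many $n$, completing the proof.
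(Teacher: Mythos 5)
Your proof is correct and follows essentially the same route as the paper: the same perfect binary tree construction, the same connectivity argument forcing one arc into each child subtree, and the same observation that the root of each subtree minimises the total distance to that subtree. The only difference is that you explicitly verify the $1$-median claim via the identity $D_T(w)-D_T(p(w))=|T|-2|T_w|$, which the paper simply states as an observation.
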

\begin{proof}
Let $k$ be a positive integer, and let $n=2^{k+1}-1$.
Let $G$ be a perfect binary tree on $n$ vertices;
that is, $G$ has vertex set $\{u_1,\dots,u_n\}$,
and for all $1\leq i < n/2$, vertex $u_i$ has arcs to vertices $u_{2i}$ and $u_{2i+1}$.
Then $G$ is a realization of a Tree-BG instance and has diameter $\Theta(\log n)$.
To complete the proof we just need to show that $G$ is an equilibrium graph in the SUM version.

For each $i$, let $T_i$ be the tree rooted at vertex $i$.
For each $1\leq i < n/2$, vertex $u_i$ has budget 2.
In order for the graph to be connected, $u_i$
must have an arc to a vertex in $T_{2i}$ and an arc to a vertex in $T_{2i+1}$.
Observe that for every $1\leq j \leq n$, vertex $u_j$ has less total distance to vertices in $T_j$ than any other vertex in $T_j$.
So, the best response for $u_i$ is to have arcs to vertices $u_{2i}$ and $u_{2i+1}$,
and thus it is already playing its best response.
For $i>n/2$, vertex $u_i$ has zero budget, so it is obviously playing its best response.
Therefore, all vertices are playing their best responses, and $G$ is an equilibrium graph.
\end{proof}


\section{The structure of equilibrium graphs for $(1,1,\dots,1)$-BG}
\label{sec:unitbudget}
In the previous section we considered bounded budget network creation games
in which the sum of players' budgets is $n-1$.
In these instances, there is at least one player with zero budget.
One might expect that if all players have positive budgets,
then the diameter of equilibrium graphs drop significantly.
In this section we consider an extreme case,
for which this expectation is realistic,
and in fact the diameter is $O(1)$.
More precisely, we study games
in which all vertices have unit budgets, i.e.~$b_i=1$ for all $i$,
and prove that the equilibrium graphs for these instances have a special structure.
In particular, we prove that all equilibrium graphs for $(1,\dots,1)$-BG in SUM and MAX versions have diameter less than 5 and 8, respectively, and therefore these instances have bounded price of anarchy in both versions.

\begin{theorem} \label{thm:unitbudgetsum}
Any equilibrium graph for $(1,\dots,1)$-BG in the SUM version is connected, has a unique cycle with at most 5 vertices,
and any vertex is either on the cycle or has a neighbor in the cycle.
\end{theorem}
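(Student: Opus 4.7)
The plan is first to identify the global structure of $G$, then to bound the cycle length, and finally to rule out deep trees. Since each vertex owns exactly one outgoing arc, $G$ is a functional digraph with $n$ arcs; by Lemma~\ref{lem:connected}, $U(G)$ is connected. Every connected functional digraph consists of a unique directed cycle $C$ (the recurrent set of the out-neighbor map) together with trees rooted at the vertices of $C$ whose arcs all point toward $C$, which settles the connectedness and uniqueness-of-cycle portions of the claim. Write $C = c_1 c_2 \cdots c_k c_1$ in the direction of the cycle's arcs, let $\mathcal{T}_j$ be the tree rooted at $c_j$, and set $n_j := |\mathcal{T}_j| - 1$.

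For the bound $k \leq 5$, I would consider the shortcut deviation $\arc{c_i c_{i+1}} \to \arc{c_i c_{i+2}}$, which turns $c_{i+1}$ into a tree vertex hanging from $c_{i+2}$ and shortens the cycle to length $k-1$. Since $c_j$ is the only entrance to $\mathcal{T}_j$, every vertex of $\mathcal{T}_j$ undergoes the same change in distance to $c_i$ as $c_j$ itself, so the total change in $c_i$'s SUM cost equals $\sum_{j \neq i} |\mathcal{T}_j|\,\Delta_j$, where $\Delta_j$ is the change in $\dist(c_i, c_j)$ and can be read directly from the before/after cycle structure. For $k \geq 6$ the resulting NE inequality rearranges into $n_{i+1} \geq n_{i+2} + n_{i+3} + \cdots + n_{i + \lfloor k/2 \rfloor} + 1$. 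Summing over all $i$ cyclically, each $n_j$ is counted once on the left and at least twice on the right, producing a contradiction; hence $k \leq 5$.

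For the last claim, suppose for contradiction that some vertex is at distance $\geq 2$ from $C$, and choose $u$ at distance exactly $2$, with parent $u_1$ (distance $1$ from $C$) and grandparent $c \in C$. Applying the deviation $\arc{u u_1} \to \arc{u c}$ preserves distances inside the subtree $T_u$ rooted at $u$ and flips those outside across the tree edge $u_1 c$; the NE condition simplifies to $2|T_{u_1}| \geq |T_u| + n$, giving $|T_{u_1}| \geq (n+1)/2$. The finishing blow is the deviation of the cycle predecessor $c'$ of $c$ (the unique cycle vertex with $\arc{c' c}$): replace $\arc{c' c}$ by $\arc{c' u_1}$. A short case analysis over $k \in \{2,3,4,5\}$ shows that the resulting change in $c'$'s cost takes the form of a small constant plus a short linear combination of the $n_j$'s on $C$ minus $2|T_{u_1}|$; combining with $|T_{u_1}| \geq (n+1)/2$ and $\sum_j n_j = n - k$ then produces a contradiction in every case. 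The main obstacle is this final case analysis, since the predecessor swap reorganizes a segment of the cycle into a pendant path (and simply breaks the brace when $k = 2$); the cleanest approach is to partition the vertex set into $C$, $T_{u_1}$, the rest of $\mathcal{T}_c$, and the remaining trees, and apply the principle ``change in cost to a subtree equals its size times the change in distance to its root.''
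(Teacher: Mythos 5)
Your argument is correct and follows the same skeleton as the paper's proof: connectedness plus the functional-digraph structure give the unique cycle; a one-step shortcut $\arc{c_ic_{i+1}}\to\arc{c_ic_{i+2}}$ bounds the cycle length; and a deep-vertex deviation combined with a deviation of the cycle predecessor rules out depth $\geq 2$. Two points of genuine divergence are worth noting. For $k\le 5$, the paper writes the shortcut inequality at a single cleverly chosen vertex (after relabelling so that the tree at $c_k$ is at least as large as the tree at $c_{k-1}$), whereas you sum the inequality cyclically over all $i$; both work, and your summation is arguably more robust since it needs no choice of starting point. For the depth bound, the paper takes $v$ at \emph{maximum} distance $l\ge 2$ from the cycle and swaps its arc to its grandparent, getting a lower bound on the size of the set $F$ of siblings of $v$ together with its parent; it then has the cycle predecessor swap onto $v$'s parent and bounds every distance increase crudely by $l-1$ per vertex, which makes the second inequality independent of the cycle length and avoids any case analysis. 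Your version fixes the depth at exactly 2, measures everything by the subtree sizes $|T_u|$ and $|T_{u_1}|$ (yielding $|T_{u_1}|\ge(n+1)/2$), and then must compute the predecessor's cost change exactly, which forces the case split over $k\in\{2,3,4,5\}$. I checked that this split does close: for $k=2$ the swap strictly decreases $c'$'s cost by $|T_{u_1}|$; for $k=3,4$ the equilibrium condition forces $|\mathcal{T}_c|-|T_{u_1}|\ge|T_{u_1}|$; and for $k=5$ it forces $|\mathcal{T}_c|+|\mathcal{T}_{c^{+}}|-|T_{u_1}|\ge|T_{u_1}|$, where $c^{+}$ is the cycle successor of $c$ --- each of which contradicts $|T_{u_1}|\ge(n+1)/2$ once you account for the remaining cycle vertices. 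So your route is sound; the paper's ``increase by at most $l-1$'' estimate is the trick that lets it skip your final case analysis.
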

\begin{proof}
Let $G$ be an equilibrium graph for $(1,\dots,1)$-BG in the SUM version.
If the number of players is two, then the only realization of the game consists of a 2-cycle and the proof is complete.
So we may assume that $n$ is larger than two.
First, we show that $G$ does not have a brace.
Assume that there is a brace $\{u,v\}$.
As $n>2$, there is a third vertex $w$ such that at least one of $u$ or $v$, say $u$, is not adjacent to $w$ in $U(G)$.
So if $u$ replaces its arc $\arc{uv}$ with the arc $\arc{uw}$, its cost decreases, which is a contradiction.
Hence $G$ does not have a brace.

Second, $U(G)$ is connected by Lemma~\ref{lem:connected}, and has $n$ edges, thus it has exactly one cycle.
Indeed, since every vertex in $G$ has outdegree 1, $G$ has a unique directed cycle.
Let $C = (v_1,v_2,\dots, v_k)$ be the unique directed cycle in $G$.
For the ease of notation, define $v_0 = v_k$.
Every vertex not in $C$ is connected to $C$ via a unique path.
Let $A_i$ be the set of vertices that are connected to $C$ through $v_i$ (including $v_i$ itself).
We may assume, by relabeling the vertices of $C$ if necessary, that $|A_k| \geq |A_{k-1}|$.

Third, we show that $k$ is at most 5. Suppose that $k>5$.
If $v_{k-2}$ replaces its arc $\arc{v_{k-2} v_{k-1}}$ with the arc $\arc{v_{k-2} v_k}$,
then its distances to vertices in $A_k$ and to $v_1$ decrease by one (as $k>5$);
and the only increment in the cost of $v_{k-2}$ would be because its distances to vertices in $A_{k-1}$ are increased by one.
Recall that $|A_{k-1}| < |A_k| + 1$, so this swap would decrease the cost of $v_{k-2}$,
which contradicts the assumption that $G$ is an equilibrium graph. Hence $k \leq 5$.

Fourth, to complete the proof, we show that for all $1\leq i \leq k$, every vertex in $A_i$ is either equal or adjacent to $v_i$.
Assume that this is not the case for some $i$, and let $v$ be a vertex in $A_i$ with maximum distance from $v_i$.
Let $l=\dist(v,v_i)$. So by the assumption, $l>1$.
Note that the subgraph of $G$ induced by $A_i$ is a tree $T$ in which  all arcs are directed toward $v_i$.
Make $T$ rooted by setting $v_i$ as the root.
Let $p$ be the parent of $v$, and $p'$ be the parent of $p$ in $T$.
Since $v$ is not adjacent to $v_i$, we have $p\neq v_i$ and $p'$ is well defined.
Let $W$ be the set of children of $p$ in $T$, and let $F=W\cup\{p\}\setminus\{v\}$.
If $v$ replaces its arc $\arc{vp}$ with the arc $\arc{vp'}$,
then its distances to vertices in $F$ increase by one,
and its distances to all other vertices decrease by one,
since vertices of $W$ have no children.
As $v$ is playing its best response, we have
\begin{equation}
\label{eq:fnf-1}
|F| \geq n - |F| - 1.
\end{equation}
If $v_{i-1}$ replaces its arc $\arc{v_{i-1}v_i}$ with the arc $\arc{v_{i-1}p}$, its distances to vertices in $F\cup\{v\}$
decrease by $l-1$, and its distance to any other vertex increases by at most $l-1$.
Since $v_{i-1}$ is playing its best response, we have
$$(n - |F| - 2)(l-1) \geq (|F| + 1)(l-1),$$
which contradicts (\ref{eq:fnf-1}) as $l>1$.
\end{proof}

Next we prove a similar structure theorem for equilibria of $(1,\dots,1)$-BG in the MAX version.

\begin{theorem} \label{thm:unitbudgetmax}
Any equilibrium graph for $(1,\dots,1)$-BG in the MAX version is connected,
has a unique cycle with at most 7 vertices,
and all vertices are within distance 2 of the cycle.
\end{theorem}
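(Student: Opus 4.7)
By Lemma~\ref{lem:connected}, $U(G)$ is connected, so the $(\kappa-1)n^2$ penalty term in $c_{MAX}$ vanishes and a vertex's cost is simply its local diameter; my plan is to adapt the three-phase strategy of Theorem~\ref{thm:unitbudgetsum} to the MAX cost function, with each phase yielding a bound weaker by two (cycle length $7$ instead of $5$, tree depth $2$ instead of $1$) but analogous in spirit. Since each player owns exactly one arc, $G$ is a functional graph on $n$ vertices with $n$ arcs, so it has a unique directed cycle $C=(v_1,\ldots,v_k)$ (the case $k=2$ being a brace), and each non-cycle vertex, by iterating its out-arc, eventually reaches $C$; the vertices therefore partition into $k$ subsets $A_1,\ldots,A_k$, where $A_i$ consists of $v_i$ together with the rooted tree of vertices whose cycle-ward paths end at $v_i$.

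For the cycle-length bound I would argue by contradiction. Suppose $k\ge 8$; pick a cycle vertex $v_i$ and consider the swap replacing $\arc{v_iv_{i+1}}$ with a shortcut $\arc{v_iv_j}$ for $j$ roughly antipodal to $i$. After the swap the underlying graph becomes a shorter cycle of length at most $\lceil k/2\rceil+1$ together with a pendant path of length $\lfloor k/2\rfloor-1$, so $v_i$'s distance to every cycle vertex either decreases or stays the same, with a strict decrease around the antipodal region. The job is then to choose $v_i$ (and $j$) so that the overall local diameter, which also picks up contributions from the trees $A_l$, strictly decreases; for instance, selecting $v_i$ so that $A_{i+1}$ is shallowest minimizes the penalty introduced by the dangling pendant path and forces a strict improvement, contradicting the best-response property.

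For the tree-depth bound I would take a vertex $v$ at distance $d\ge 3$ from $C$, let $v=u_d\to u_{d-1}\to\cdots\to u_0\in C$ be its cycle-ward path, and examine the swap $\arc{u_2u_1}\mapsto\arc{u_2u_0}$. This swap decreases by $1$ every distance from $u_2$ to $u_0$ and to vertices reached through $u_0$ (the rest of the cycle and every other $A_l$), while it increases by $1$ only the distance from $u_2$ to $u_1$ and to $u_1$'s descendants outside the subtree containing $u_2$. Since the local diameter of $u_2$ is attained on the cycle-plus-trees side whenever the cycle half-length (or any tree reached through it) exceeds the depth below $u_1$ on the other branch, the swap strictly improves $u_2$'s local diameter; a configuration where the other branch would tie the cycle side can be disposed of by descending one level (letting $u_3$, or the appropriate deeper ancestor, perform the analogous swap).

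The main obstacle lies in the MAX setting's sensitivity to the single worst distance: a swap that reduces many distances can still fail to reduce the maximum if another distance catches up. In the cycle-length step this manifests as the need to balance the savings on one side of the cycle against the growth of the pendant path on the other, which in turn pulls on tree depths; the delicate moves are the careful choice of $v_i$ and $j$ to place the shallowest tree into the pendant path, and an argument that no admissible tree configuration can make the pendant-path worst distance tie the post-swap cycle worst distance. An analogous balancing act is needed in the depth step, and in both cases one must verify that every candidate for the local diameter strictly improves, not merely one.
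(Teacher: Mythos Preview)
Your overall skeleton matches the paper's, but both of the core swap arguments have genuine gaps, and in each case the paper makes the \emph{opposite} choice to yours.

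\textbf{Cycle length.} An antipodal jump does not work. For $k=8$, replacing $\arc{v_1v_2}$ by $\arc{v_1v_5}$ sends $\dist(v_1,v_2)$ from $1$ to $4$, so even with $A_2=\{v_2\}$ the new local diameter of $v_1$ is at least $4=\lfloor k/2\rfloor$, which ties the old value when all $m_j=0$ and can exceed it once trees like $A_3$ are nontrivial. The paper instead takes a \emph{short} hop, replacing $\arc{v_1v_2}$ by $\arc{v_1v_4}$, so the pendant path has length only $2$; and crucially it relabels so that the \emph{deepest} tree sits at the antipodal index $1+\lfloor k/2\rfloor$, not the shallowest at index $2$. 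With that labeling the old local diameter of $v_1$ is exactly $\lfloor k/2\rfloor+m_{1+\lfloor k/2\rfloor}$, and one checks that after the hop every new distance from $v_1$ to $v_j$ is at most $\lfloor k/2\rfloor-1$ when $k>7$, whence the new local diameter is at most $\lfloor k/2\rfloor-1+\max_j m_j<\lfloor k/2\rfloor+m_{1+\lfloor k/2\rfloor}$. Your ``shallowest $A_{i+1}$, deep jump'' heuristic is the wrong pair; ``deepest $A$ at the antipode, shallow jump'' is what closes the argument.

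\textbf{Tree depth.} Swapping at $u_2$ fails because $u_1$ may have other children with subtrees as deep as the one containing $u_2$; after your swap the distance from $u_2$ to those subtrees \emph{increases} by $1$, and there is no reason this new value is dominated by the cycle side. Your proposed fix (``descend one level'') just relocates the same problem. The paper instead lets the swap be performed by a vertex $v$ at \emph{maximum} depth in $A_i$: then the only vertices whose distance from $v$ increases are $p$ and the siblings of $v$, and by maximality of $v$ those siblings are leaves, so their new distance is exactly $3$. Since $\dist(v,v_i)\ge3$ forces the old local diameter of $v$ to exceed $3$, the swap strictly improves. Choosing the deepest vertex is what kills the ``other branch'' obstruction outright; choosing $u_2$ does not.
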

\begin{proof}
Let $G$ be an equilibrium graph for $(1,\dots,1)$-BG in the MAX version.
By Lemma~\ref{lem:connected}, $U(G)$ is connected, so it has exactly one cycle.
In fact $G$ has a unique directed cycle.
($G$ may have a brace, which is simply a cycle with two vertices.)

Let $C = (v_1,v_2,\dots, v_k)$ be the unique directed cycle in $G$.
We show that $k$ is at most 7. Suppose that $k>7$.
Every vertex not in $C$ is connected to $C$ via a unique path.
Let $A_i$ be the set of vertices that are connected to $C$ through $v_i$ (including $v_i$ itself).
Define
$$m_i = \max \{\dist(v,v_i) : v\in A_i\}.$$
We may assume, by relabeling the vertices of $C$ if necessary, that $m_{1 + \lfloor{k/2}\rfloor}$ is the largest $m_i$.
Then the local diameter of $v_1$ is exactly $\lfloor{k/2}\rfloor + m_{1 + \lfloor{k/2}\rfloor}$.
It can be verified that if $v_1$ replaces its arc $\arc{v_1 v_2}$ with the arc $\arc{v_1 v_4}$,
then as $k>7$, the distance between $v_1$ and any vertex $u \in A_j$ becomes at most $\lfloor{k/2}\rfloor -1+ \dist(v_j,u)$.
We have
$$\left\lfloor\frac{k}{2}\right\rfloor -1+ \dist(v_j,u) \leq \left \lfloor \frac{k}{2} \right \rfloor - 1 + m_j
< \left \lfloor \frac{k}{2} \right \rfloor + m_{1 + \lfloor{k/2}\rfloor},$$
which contradicts the assumption that $v_1$ is playing its best response in $G$. Hence $k \leq 7$.

Finally, to complete the proof, we show that for all $1\leq i \leq k$, every vertex in $A_i$ is within distance at most 2 from $v_i$.
Assume that this is not the case for some $i$, and let $v$ be a vertex in $A_i$ with maximum distance from $v_i$.
Note that the subgraph of $G$ induced by $A_i$ is a tree $T$ in which  all arcs are directed toward $v_i$.
Make $T$ rooted by setting $v_i$ as the root.
Let $p$ be the parent of $v$, and $p'$ be the parent of $p$ in $T$.
Since $v$ is not adjacent to $v_i$, we have $p\neq v_i$ and $p'$ is well defined.
Since $\dist(v,v_i) \geq 3$, the local diameter of $v$ is larger than 3.
If $v$ replaces its arc $\arc{vp}$ with the arc $\arc{vp'}$, then its distances to all vertices except the children of $p$ decreases,
and its distance to each children of $p$ (other than $v$ itself) increases by 1 and becomes at most 3.
Consequently, the local diameter of $v_1$ decreases, contradicting the fact that $v$ is playing its best response in $G$.
\end{proof}


\section{A lower bound for the price of anarchy in the MAX version}
\label{sec:lowermax}
We saw in the previous section that if all players have budget 1,
then the equilibrium graphs have diameter $O(1)$.
It appears intuitive that increasing the budgets would decrease the diameter of the equilibrium graphs.
However, this is not true, and in this short section we prove that for some positive budget values,
there exist equilibrium graphs in the MAX version with diameter $\Omega(\sqrt{\log n})$.
This surprising phenomenon resembles Braess's paradox in network routing games~\cite{nagurney}.
This result implies that the price of anarchy of bounded budget network creation games
when all players have positive budgets is $\Omega(\sqrt{\log n})$ in the MAX version.

\begin{lemma}
\label{lem:ADelta}
Let $U$ be an undirected graph with $n$ vertices, diameter $d$ and maximum degree $\Delta$ satisfying
$\Delta^d - 1 < n (\Delta-1)$.
Then for any vertex $v$ and any subset $A$ of vertices having size at most $\Delta$,
there exists a vertex $u$, different from $v$, with $\dist(u,A) > d-2$.
\end{lemma}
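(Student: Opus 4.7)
The plan is to show that the set $B = \{w \in V(U) : \dist(w, A) \leq d - 2\}$ has size at most $n - 2$. Once this is established, $|V(U) \setminus B| \geq 2$, so we can choose a vertex $u$ in $V(U) \setminus B$ with $u \neq v$; this $u$ automatically satisfies $\dist(u, A) > d - 2$, which is the conclusion of the lemma. So the whole proof reduces to controlling $|B|$ via a ball-counting argument and then extracting the right integer inequality from the hypothesis.

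First I would establish the bound on $|B|$ by a standard BFS argument. The degenerate cases $\Delta \leq 1$ are either vacuous or excluded by the hypothesis (for $\Delta = 1$ the hypothesis reduces to $0 < 0$), so assume $\Delta \geq 2$. For any single vertex $a$, breadth-first search from $a$ gives
$$|\{w \in V(U) : \dist(w,a) \leq d-2\}| \;\leq\; 1 + \Delta + \Delta^2 + \cdots + \Delta^{d-2} \;=\; \frac{\Delta^{d-1} - 1}{\Delta - 1}.$$
Since $B$ is contained in the union of these balls over $a \in A$, and $|A| \leq \Delta$,
$$|B| \;\leq\; \Delta \cdot \frac{\Delta^{d-1} - 1}{\Delta - 1} \;=\; \frac{\Delta^d - \Delta}{\Delta - 1} \;=\; \frac{\Delta^d - 1}{\Delta - 1} - 1.$$

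Next I would convert the real-valued hypothesis into the required integer bound. The quantity $\frac{\Delta^d - 1}{\Delta - 1} = 1 + \Delta + \Delta^2 + \cdots + \Delta^{d-1}$ is a positive integer, and the hypothesis $\Delta^d - 1 < n(\Delta - 1)$ rewrites as $\frac{\Delta^d - 1}{\Delta - 1} < n$, which between integers strengthens to $\frac{\Delta^d - 1}{\Delta - 1} \leq n - 1$. Substituting into the previous display gives $|B| \leq n - 2$, which is what we needed.

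The only subtle point to highlight is the extra $-1$ that appears naturally when we collapse $\Delta \cdot \frac{\Delta^{d-1}-1}{\Delta-1}$ into $\frac{\Delta^d - 1}{\Delta - 1} - 1$: this is exactly the slack needed to absorb the one excluded vertex $v$, and it is the reason the hypothesis is stated as $\Delta^d - 1 < n(\Delta - 1)$ rather than something weaker. Apart from watching this bookkeeping, every step is elementary.
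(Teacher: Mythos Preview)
Your proof is correct and follows essentially the same ball-counting argument as the paper: both bound the number of vertices within distance $d-2$ of $A$ by $|A|(1+\Delta+\cdots+\Delta^{d-2})\le \Delta\cdot\frac{\Delta^{d-1}-1}{\Delta-1}=\frac{\Delta^d-1}{\Delta-1}-1$ and then compare with the hypothesis. The only cosmetic difference is that the paper argues by contradiction (adding $1$ for the excluded vertex $v$ and reaching $n\le\frac{\Delta^d-1}{\Delta-1}$), whereas you invoke integrality to get $|B|\le n-2$ directly; both are valid and equivalent.
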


\begin{proof}
There are at most $|A|\Delta$ vertices whose distance from $A$ is exactly 1.
Similarly, there are at most $|A|\Delta^2$ vertices with distance exactly 2 from $A$.
Continuing in the same way, we find that there are at most $|A|\Delta^{d-2}$ vertices with distance exactly $d-2$ from $A$.
If there is no $u\neq v$ with $\dist(u,A) > d-2$, then we must have
$$n \leq 1 + |A| + |A|\Delta + \dots + |A|\Delta^{d-2} \leq 1 + \Delta + \Delta^2 + \dots + \Delta^{d-1} = \frac{\Delta^{d}-1}{\Delta-1},$$
which contradicts the assumption $\Delta^d - 1 < n (\Delta-1)$.
Thus there exists a vertex $u$, different from $v$, with $\dist(u,A) > d-2$.
\end{proof}

\begin{lemma}
\label{lem:wordgraph}
For every integers $t,k>3$ satisfying $(2t)^k - 1 < t^k (2t-1)$,
there exists an undirected graph $U$ with $t^k$ vertices, minimum degree at least 2, and diameter $k$,
such that every $G$ with $U=U(G)$ is an equilibrium graph in the MAX version.
\end{lemma}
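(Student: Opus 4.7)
My plan is to take $U$ to be the undirected de Bruijn graph over alphabet $\{1,\ldots,t\}$ with word length $k$: $V(U) = \{1,\ldots,t\}^k$, and for each word $(a_1,\ldots,a_k)$ and each $b \in \{1,\ldots,t\}$ I place the edge between $(a_1,\ldots,a_k)$ and $(a_2,\ldots,a_k,b)$, discarding self-loops and treating $U$ as simple. The structural verification is routine: $|V(U)| = t^k$; each vertex has at most $t$ shift-left and $t$ shift-right neighbors, so its degree is at most $2t$ and at least $t \geq 2$; and the diameter equals $k$, because the directed de Bruijn graph has diameter $k$ (yielding the upper bound) while a single shift preserves $k-1$ of the previous letters, so traversing from $(1,\ldots,1)$ to $(t,\ldots,t)$ requires at least $k$ steps.

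The main task is to show that for every directed $G$ with $U(G) = U$, every vertex plays a best response in the MAX version. Fix $v$ and let $A_0$ be its out-neighborhood in $G$ (of size $b_v$) and $A^-$ its in-neighborhood. Because $U$ is simple, $G$ has no brace at $v$, so $A_0 \cap A^- = \emptyset$ and $|A_0|+|A^-| = \deg_U(v) \leq 2t$. Suppose $v$ deviates to a strategy $A$ with $|A|=b_v$, yielding a new underlying graph $U'$. Then $U$ and $U'$ agree on every edge not incident to $v$, and $N_{U'}(v) = A \cup A^-$ has size at most $b_v + |A^-| \leq 2t$.

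I then invoke Lemma~\ref{lem:ADelta} applied to $U$ with $d=k$, $\Delta=2t$, $n=t^k$, and the subset $A \cup A^-$; the hypothesis $(2t)^k - 1 < t^k(2t-1)$ is precisely the numerical condition the lemma requires. This produces some $u \neq v$ with $\dist_U(u, A \cup A^-) \geq k-1$. Now any shortest $uv$-path in $U'$ ends with an edge from some $w \in N_{U'}(v) = A \cup A^-$ to $v$, and its prefix $u = x_0, x_1, \ldots, x_\ell = w$ avoids $v$ (otherwise the path would not be shortest). Since $U$ and $U'$ coincide on all edges not incident to $v$, this prefix is also a path in $U$, so $\ell \geq \dist_U(u,w) \geq \dist_U(u, A \cup A^-) \geq k-1$, whence $\dist_{U'}(u,v) \geq k$. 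Thus the local diameter of $v$ in $U'$ is at least $k$, whereas in $U$ it is at most $k$ (the diameter of $U$), so $v$'s MAX-cost cannot strictly decrease, and in fact grows further if the deviation disconnects $U'$.

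The main obstacle is that $U'$ need not be a subgraph of $U$, since the deviation can create edges $\{v,w\}$ with $w \notin A_0 \cup A^-$. My plan circumvents this by noting that every such new edge is incident to $v$, so deleting $v$ from either graph yields the same subgraph on $V \setminus \{v\}$; the interior of any shortest $uv$-path in $U'$ lives there, and its length is therefore bounded below by the corresponding $U$-distance. A secondary subtlety is bounding $|A \cup A^-|$ by $2t$, which fuses the degree bound $\deg_U(v) \leq 2t$ with the no-braces observation $A_0 \cap A^- = \emptyset$ to conclude $|A^-| \leq 2t - b_v$.
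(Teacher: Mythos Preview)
Your proof is correct and follows essentially the same approach as the paper: both construct the undirected de~Bruijn graph on $\{1,\ldots,t\}^k$, verify the degree and diameter properties, and then use Lemma~\ref{lem:ADelta} applied to the (at most $2t$)-sized new neighborhood of a deviating vertex to find a vertex that stays at distance at least $k$. Your argument is actually a bit more explicit than the paper's in two places---you spell out why $G$ has no braces (hence $|A^-|\le 2t-b_v$), and you give the clean path-prefix justification that the deviation cannot shorten $\dist(u,v)$ because $U$ and $U'$ agree away from $v$---whereas the paper simply asserts that ``it is not hard to see'' the new distance from any $w\ne v$ to the neighborhood set does not decrease.
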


\begin{proof}
Let $U$ be the graph with vertex set $\{1,2,\dots,t\}^k$ and with
vertices $(x_1,x_2,\dots,x_k)$ and $(y_1,y_2,\dots,y_k)$ being adjacent if at least one of the following happens.
\begin{enumerate}
\item $x_i=y_{i+1}$ for all $1\leq i\leq k-1$,
\item $y_i=x_{i+1}$ for all $1\leq i \leq k-1$.
\end{enumerate}
Note that we want $U$ to be a simple graph,
so we only add edges between distinct vertices,
and add at most one edge between any pair.
Then $U$ has minimum degree at least $t-1$, maximum degree $2t$, and $t^k$ vertices.
The local diameter of every vertex is $k$: for an arbitrary $(x_1,\dots,x_k) \in V(U)$ choose $y_1,\dots,y_k \notin \{x_1,\dots,x_k\} $.
Then it is easy to check that the distance between $(x_1,\dots,x_k)$ and $(y_1,\dots,y_k)$ is $k$.

Let $G$ be a directed graph such that $U=U(G)$.
Assume for the sake of contradiction that $v$ is a vertex of $G$ that is not playing its best response.
Let $A$ be the set of neighbors of $v$ (vertices with an incoming arc from $v$ or an outgoing arc to $v$)
if it had changed its strategy and played its best response.
As $v$ has degree at most $2t$, we have $|A| \leq 2t$.
Since $(2t)^k - 1 < t^k (2t-1)$, by Lemma~\ref{lem:ADelta}, there exists a vertex $u$, different from $v$, with $\dist(u,A) \geq k-1$.

Now, suppose that $v$ changes its strategy so that its neighborhood becomes $A$.
Then it is not hard to see that for any vertex $w\neq v$, the new distance between $w$ and $A$ is not less
than their old distance.
In particular, the new distance between $u$ and $A$ is at least $k-1$.
Hence the new distance between $u$ and $v$ is at least $k$,
i.e.~the local diameter of $v$ has not decreased, contradiction.
Therefore, all vertices were playing their best responses in $G$, and $G$ is an equilibrium graph.
\end{proof}

\begin{theorem}
For infinitely many $n$, there exist positive integers $b_1,b_2,\dots,b_n$,
such that there exists an equilibrium graph for $(b_1,b_2,\dots,b_n)$-BG in the MAX version with diameter $\sqrt{\log n}$.
\end{theorem}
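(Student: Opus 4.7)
The plan is to apply Lemma~\ref{lem:wordgraph} with an appropriately chosen pair $(t,k)$ so that the resulting graph has exactly $n$ vertices and diameter $\sqrt{\log n}$, and then to orient its edges so that every vertex owns at least one arc (so every budget is positive).

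For any integer $k > 3$, set $t = 2^k$ and $n = t^k = 2^{k^2}$, so that $\sqrt{\log n} = k$. I first need to verify the hypothesis $(2t)^k - 1 < t^k(2t - 1)$. Substituting $t = 2^k$ gives $2^{k^2 + k} - 1 < 2^{k^2 + k + 1} - 2^{k^2}$, which after rearranging reduces to $2^k + 1 \le 2^{k+1}$; this holds for every $k \ge 1$. Lemma~\ref{lem:wordgraph} then produces an undirected graph $U$ on $n$ vertices, of diameter exactly $k$ and minimum degree at least $2$, having the property that every directed graph $G$ with $U(G) = U$ is a MAX-version equilibrium (for the budget profile determined by the out-degrees of $G$).

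The main remaining (and routine) step is to orient the edges of $U$ so that every vertex has out-degree at least $1$; then defining $b_i$ to be the out-degree of $v_i$ will give positive budgets and make the oriented $G$ a realization of $(b_1,\ldots,b_n)$-BG. Because $U$ is connected (vertex-transitive with edges) and has minimum degree $\ge 2$, every connected component contains a cycle. Pick one such cycle $C$ and orient its edges consistently, so each vertex of $C$ owns one outgoing arc along $C$. Next, grow a BFS tree of $U$ rooted at (the vertex set of) $C$, and orient every BFS-tree edge from the child toward its parent; every non-$C$ vertex then owns the arc to its parent. Orient the remaining (non-tree, non-$C$) edges arbitrarily. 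This yields a directed $G$ with $U(G) = U$ in which every vertex has out-degree at least $1$.

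By Lemma~\ref{lem:wordgraph}, $G$ is an equilibrium in the MAX version of $(b_1,\ldots,b_n)$-BG; its diameter equals $\diam(U) = k = \sqrt{\log n}$. Letting $k$ range over the integers greater than $3$ gives infinitely many such $n$, completing the proof. The only place where any care is needed is the orientation argument, and the lower bound $\delta(U) \ge 2$ supplied by Lemma~\ref{lem:wordgraph} makes this essentially automatic; all other steps are immediate from that lemma.
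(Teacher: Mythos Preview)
Your proof is correct and follows the paper's approach exactly: set $t=2^k$, verify the numerical hypothesis, apply Lemma~\ref{lem:wordgraph}, and then orient the edges of $U$ so that every vertex has positive out-degree (the paper simply asserts such an orientation exists because $\delta(U)\ge 2$, while you spell out an explicit cycle-plus-BFS construction). One small slip: $U$ is not vertex-transitive as you parenthetically claim, but this is irrelevant since connectedness already follows from the finite diameter.
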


\begin{proof}
Let $k>3$ and $t = 2^k$.
It is easy to check that we have $(2t)^k - 1 < t^k (2t-1)$.
Let $U$ be the graph given by Lemma~\ref{lem:wordgraph}, which has $n=(2^k)^k = 2 ^{k^2}$ vertices,
minimum degree at least 2, and diameter $k = \sqrt{\log n}$.
Now, let $G$ be a directed graph with $U(G)=U$ and such that the outdegree of all vertices of $G$ is at least 1.
Such a $G$ exists as the minimum degree of $U$ is larger than 1.
Then $G$ is an equilibrium graph by Lemma~\ref{lem:wordgraph} and the proof is complete.
\end{proof}


\section{An upper bound for the price of anarchy in the SUM version}
\label{sec:sumupper}
In this section we prove a general bound of $2^{O(\sqrt{\log n})}$
for the price of anarchy of bounded budget network creation games in the SUM version.
The proof, which is long and consists of several steps,
follows the line of the proof of Theorem~9 of~\cite{adhl}, but the first step is more involved.

In the following we focus on the SUM version.
For a vertex $u$ and a nonnegative integer $r$, define
$$B_r(u) = \{v : \dist(u,v) \leq r\}.$$
The first step is to prove the following theorem.
\begin{theorem}
\label{thm:first_step}
Let $u$ be a vertex of an equilibrium graph $G$, and let $r$ be a positive integer.
Assume that the subgraph induced by $B_r(u)$ is a tree $T$.
Then $r=O(\log n)$.
\end{theorem}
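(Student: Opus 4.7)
My plan is to mirror the exponential-growth argument in the proof of Theorem~\ref{t:treepoa} (the SUM Tree-BG diameter bound), adapting it to the situation where $G$ is not itself a tree, only its restriction to $B_r(u)$ is. Root $T$ at $u$, pick any $w_r\in T$ with $\dist(u,w_r)=r$, and let $P=u=w_0,w_1,\dots,w_r$ be the unique $u$-to-$w_r$ path in $T$. Partition $V(T)$ into sets $A_0,\dots,A_r$, with $A_k$ the set of $T$-vertices whose $T$-path to $P$ ends at $w_k$. Each of the $r$ arcs along $P$ is oriented one way or the other; by pigeonhole at least $t=\lceil r/2\rceil$ of them are ``downward'', i.e., of the form $\arc{w_{i_j}w_{i_j+1}}$ with $i_1<\dots<i_t$, and (discarding $i_t$ if necessary) we may assume $i_j+2\le r$.

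For each such $j$, $w_{i_j}$ owns the arc $\arc{w_{i_j}w_{i_j+1}}$, so its SUM best response implies that the swap replacing this arc with $\arc{w_{i_j}w_{i_j+2}}$ cannot decrease $w_{i_j}$'s total cost. Following the line of Theorem~\ref{t:treepoa}, the net contribution of vertices of $T$ (measured via tree distances) to the cost change is $|A_{i_j+1}|-\sum_{k>i_j+1}|A_k|$; if this were the whole story, the best-response inequality would immediately yield $|A_{i_j+1}|\ge\sum_{k>i_j+1}|A_k|$ for every $j$, and iterating exactly as in Theorem~\ref{t:treepoa} would give $n\ge |T|\ge 2^{t-1}-1$, hence $r=O(\log n)$.

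The main obstacle---and what makes the step ``more involved'' than its analogue in Theorem~\ref{t:treepoa} or in~\cite{adhl}---is that $G$ also contains vertices and edges outside $B_r(u)$, so the swap may alter $w_{i_j}$'s distances both to vertices not in $T$ and---through shortcut paths that leave and re-enter $T$---to vertices in $T$ as well. I would handle these corrections by noting that every such path must exit $T$ through some vertex $z^\ast\in S_r := \{v : \dist_G(u,v)=r\}$, so the relevant distance decomposes as $\dist_T(w_{i_j},z^\ast)$ plus a tail through $V(G)\setminus B_r(u)$ that is unaffected by the swap. A case analysis on the position of $z^\ast$ in $T$---inside $T_{w_{i_j+2}}$, in $T_{w_{i_j+1}}\setminus T_{w_{i_j+2}}$, or outside $T_{w_{i_j+1}}$---shows the swap changes $\dist_T(w_{i_j},z^\ast)$ by $-1$, $+1$, or $0$ respectively. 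A charging argument then shows that each vertex $y$ outside $T$ is ``hurt'' (its $G$-distance to $w_{i_j}$ increases by $1$) by at most one of the $t$ considered swaps, so the total external error summed over $j$ is at most $n-|T|$. This bound is small enough to be absorbed into the recursion, recovering the claimed logarithmic bound on $r$.
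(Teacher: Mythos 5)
There are two genuine gaps here, and they are exactly at the points where your proposal departs from being a verbatim copy of Theorem~\ref{t:treepoa}. First, the reduction ``by symmetry, assume the majority of the arcs on $P$ point downward'' is not valid. In Theorem~\ref{t:treepoa} the path is a diameter path of a tree, and reversing it is a true symmetry; here $P$ runs from the centre $u$ of the ball to a depth-$r$ leaf, and the two directions behave very differently. For a downward swap $\arc{w_{i}w_{i+1}}\to\arc{w_{i}w_{i+2}}$ one can check that every path from $w_i$ to a vertex $v\in T_{w_{i+1}}$ that leaves $B_r(u)$ has length at least $(r-i)+2+(r-\dist(u,v))$, which exceeds the tree distance $\dist(u,v)-i$ because $\dist(u,v)\le r$; so inside $T_{w_{i+1}}$ the tree distances are the true distances and the $\pm 1$ bookkeeping is exact. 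For an upward swap $\arc{w_{i+1}w_{i}}\to\arc{w_{i+1}w_{i-1}}$, however, the vertices that are supposed to get closer lie in $\bigcup_{k<i}A_k$, and for a deep vertex $v\in A_k$ with $k$ small and $i$ large the tree distance $\approx i+\dist(u,v)-2k$ can greatly exceed the length $\approx 2r+2-i-\dist(u,v)$ of a shortcut through $V(G)\setminus B_r(u)$; such a vertex need not get any closer after the swap, so the claimed decrease $\sum_{k<i}|A_k|$ does not materialize and the doubling recursion fails in this case.

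Second, even in the downward case your treatment of the vertices outside $B_r(u)$ is quantitatively insufficient. The best-response inequality you obtain has the form $|A_{i_j+1}|+E_j\ \ge\ \sum_{l>j}|A_{i_l+1}|$, where $E_j$ is the number of outside vertices hurt by the $j$-th swap, and your charging argument (which is itself only asserted, not proved) would give only the aggregate bound $\sum_j E_j\le n$. That is too weak: taking $|A_{i_l+1}|=1$ for all $l$ and $E_j=t-j-1$ satisfies every inequality with $\sum_j E_j=\Theta(t^2)$, so you would conclude only $t=O(\sqrt n)$, i.e.\ $r=O(\sqrt n)$, not $O(\log n)$. To restore doubling you would need the vertices hurt by swap $l$ to reappear on the right-hand side of the inequality for each $j<l$ (or $E_j=O(1)$ per swap), and neither is established. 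The paper sidesteps both difficulties with a different decomposition: it folds hanging subtrees that have no edges leaving $B_r(u)$ into weighted vertices (Lemma~\ref{l:poorleavefolding}, Corollary~\ref{cor:poorleavefolding}), bounds the number of consecutive degree-2 vertices along a unique-shortest path (Lemma~\ref{l:degree2fold}, where the degree-2 hypothesis guarantees that \emph{only} $v_{i_j+1}$ gets farther, so there is no external error term at all), and shows any two rich leaves are within distance 2 (Lemma~\ref{l:richleavesdis}); the residual tree is branching and hence has height $O(\log n)$. If you want to salvage your single-path approach, you would at minimum need to prove the charging claim in a per-swap (not aggregate) form and find a separate argument for the upward-arc case.
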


Assume that $u$ is chosen as the root of $T$.
Note that if every vertex in $T$ has at least two children, then
$$r = O(\log |V(T)|) = O(\log n),$$
and the theorem is proved.
Hence the problematic vertices are those with zero or one child.
Roughly speaking, in the next three lemmas, we will prove that
those vertices cannot increase the height of the tree significantly.

To prove Theorem~\ref{thm:first_step}, we need to consider weighted graphs.
We denote a weighted directed graph by $G=(V,A,w)$,
where $V$ and $A$ are the vertex set and the arc set of $G$, respectively,
and $w:V\rightarrow \mathbb{Z}^+$ assigns a weight to each vertex.
For every vertex $u$, the cost of $u$ is defined as
$$c(u) = \sum_{v\in V} w(v) \dist(u,v).$$
Note that if all vertices have unit weights, then this reduces to the original (unweighted) model.
For a subgraph $H$ of $G$ define
$$w(H) = \sum_{u\in V(H)} w(u).$$

We say that $G$ is a \emph{weak equilibrium graph} if
no vertex can decrease its cost by swapping exactly one of its edges;
more precisely, for every arc $\arc{uv} \in A$ and $x\in V$ with $\arc{ux} \notin A$,
the cost of $u$ does not decrease if the arc $\arc{uv}$ is replaced with the arc $\arc{ux}$.
Clearly every equilibrium graph is also a weak equilibrium graph.

A vertex of $G$ with degree 1 is called a \emph{leaf}.
It turns out that one should distinguish between two types of leaves:
a \emph{poor} leaf is a leaf with outdegree zero, and a \emph{rich} leaf is a leaf with outdegree one.
The poor leaves cause the most trouble and they are the reason for introducing the weights.
Let $l$ be a poor leaf in $G$, and let $\arc{ul} \in A$.
Define $G_0=(V_0,A_0,w_0)$ to be a weighted directed graph with
$$V_0 = V \setminus \{l\},\quad A_0 = A \setminus \{\arc{ul}\}, \quad
w_0(v) = \left\{ \begin{array}{c c} w(v) & \mathrm{if\ } v\neq u \\ w(u) + w(l) & \mathrm{if}\ v=u. \end{array} \right.$$
Then it can be verified that if $G$ is a weak equilibrium graph then so is $G_0$.
We say that $G_0$ is obtained by \emph{folding} the poor leaf $l$ into $u$.
The following lemma is used for handling the poor leaves.

\begin{lemma}
\label{l:poorleavefolding}
Let $G$ be a weighted weak equilibrium graph and $T$ be an induced rooted subtree of $G$ with root $z$.
Assume that
\begin{itemize}
\item every arc of $T$ is oriented away from $z$, and
\item no non-root vertex of $T$ is adjacent to a vertex outside $T$.
\end{itemize}
Then the height of $T$ is at most $1 + \log w(T)$.
\end{lemma}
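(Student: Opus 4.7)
The plan is to exploit the weak equilibrium property along a deepest root-to-leaf path of $T$. Fix such a path $z = u_0, u_1, \ldots, u_h$, and for each $i$ let $T_{u_i}$ denote the subtree of $T$ rooted at $u_i$ and write $W_i := w(T_{u_i})$. For each $i$ with $1 \leq i \leq h-1$, I will consider the single-arc swap in which vertex $u_{i-1}$ replaces its owned arc $\arc{u_{i-1}u_i}$ by the new arc $\arc{u_{i-1}u_{i+1}}$; this swap is legitimate because $u_{i+1}$ is a grandchild of $u_{i-1}$ in $T$, so $u_{i-1}$ does not already own an arc to it.

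The next step is to track the effect of this swap on the cost of $u_{i-1}$. Because $T$ is induced in $G$ and its non-root vertices have no neighbors outside $T$, the subtree $T_{u_i}$ communicates with the rest of $G$ only through $u_{i-1}$; hence in $U(G)$ the swap merely replaces the tree edge $u_{i-1}u_i$ by $u_{i-1}u_{i+1}$, does not disconnect the graph, and leaves the distance from $u_{i-1}$ to every vertex outside $T_{u_i}$ unchanged. Inside $T_{u_i}$, one checks directly that distances from $u_{i-1}$ to vertices of $T_{u_{i+1}}$ drop by exactly $1$, while distances from $u_{i-1}$ to $u_i$ and to the subtrees rooted at $u_i$'s other children rise by exactly $1$. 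The total weight that moves farther is $w(u_i) + (W_i - w(u_i) - W_{i+1}) = W_i - W_{i+1}$ and the total weight that moves closer is $W_{i+1}$, so weak equilibrium at $u_{i-1}$ forces
$$(W_i - W_{i+1}) - W_{i+1} \geq 0, \qquad \text{i.e.,} \qquad W_i \geq 2\, W_{i+1}.$$

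Iterating this inequality from $i = h-1$ down to $i = 1$ yields $W_1 \geq 2^{h-1} W_h \geq 2^{h-1}$, where the last step uses that $u_h$ is a leaf so $W_h = w(u_h) \geq 1$. Since $w(T) = W_0 \geq W_1$, I conclude $w(T) \geq 2^{h-1}$ and therefore $h \leq 1 + \log w(T)$; the boundary cases $h \in \{0, 1\}$ follow immediately from $w(T) \geq 1$.

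The step I expect to be the main obstacle is the careful justification that the proposed swap behaves exactly as described: that it does not disconnect $U(G)$, that it does not collide with an existing parallel arc or brace, and that it leaves all distances from $u_{i-1}$ to vertices outside $T_{u_i}$ invariant. Both hypotheses of the lemma are genuinely needed here. The induced-subtree assumption forbids any non-tree chord or brace within $V(T)$, so in particular $u_{i-1}$ does not already own an arc to $u_{i+1}$ and the short-circuit is clean; while the "no non-root neighbor outside $T$" assumption pins $u_{i-1}$ down as the sole portal between $T_{u_i}$ and the rest of the graph. The boundary case $i = 1$, in which $u_{i-1} = z$ may own additional arcs reaching outside $T$, causes no trouble because those external arcs are untouched by the swap and so distances from $z$ to the exterior are preserved.
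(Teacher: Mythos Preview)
Your proof is correct and follows essentially the same approach as the paper's: both exploit the single-arc swap in which a parent replaces its arc to a child by an arc to the grandchild lying on a deepest branch, and deduce from weak equilibrium that the subtree weight at least doubles with each level. The paper packages this as an induction on subtree height (showing $w(T_v)\ge 2^k$ for every non-root $v$ whose subtree has height $k$), while you unroll the same inequality $W_i\ge 2W_{i+1}$ along one fixed deepest root-to-leaf path; the two arguments are equivalent.
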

\begin{proof}
For every vertex $v$ of $T$, let $T_{v}$ be the subtree of $T$ rooted at $v$.
We will prove that for every $v \in V(T)$ that is not the root, if $T_{v}$ has height $k$, then $w(T_{v}) \geq 2^{k}$.
This shows that if the height of $T$ is $h$, then $w(T) \geq 2^{h-1}$, or equivalently, $h \leq \log w(T) + 1$.

The proof is by induction on $k$.
Correctness of the case $k = 0$ follows from the fact that all weights are positive integers.
Assume that the induction hypothesis is true for $k$, and let $v \in V(T)$ be such that $T_{v}$ has height $k+1$.
Let $p$ be the parent and $x_{1},x_{2},\dots,x_{m}$ be the children of $v$.
At least one of $T_{x_{1}},T_{x_{2}},\dots,T_{x_{m}}$ has height $k$.
We may assume that the height of $T_{x_{1}}$ is $k$.
By the induction hypothesis, $w(T_{x_{1}}) \geq 2^{k}$.
We have
$$\sum_{i=2}^{m}{w(T_{x_{i}})} + w(v) \geq w(T_{x_{1}}),$$
otherwise
the vertex $p$ could decrease its cost by replacing the arc $\arc{pv}$ with the arc $\arc{px_1}$.
Thus we find
$$w(T_{v}) = w(T_{x_{1}}) +  w(T_{x_{2}}) + \dots +  w(T_{x_{m}}) + w(v) \geq 2w(T_{x_{1}}) \geq 2^{k+1},$$
which completes the proof.
\end{proof}

Note that if the conditions of the above lemma hold, then one can fold the whole subtree $T$ into the vertex $z$.
Moreover, folding this subtree does not decrease the diameter of $G$ significantly.
More precisely, the following is true.

\begin{corollary}
\label{cor:poorleavefolding}
If $G$ is a weak equilibrium graph and we perform a sequence of subtree folds on it until we obtain a new graph $G'$ with no poor leaves,
then $G'$ is also a weak equilibrium graph and
$$\diam(G') = \diam(G) - O(\log w(G)).$$
\end{corollary}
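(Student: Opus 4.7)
The plan is to establish both assertions by tracking, for each surviving vertex $z \in V(G')$, a \emph{history tree} $T_z \subseteq G$ consisting of $z$ together with every vertex that was folded into $z$ through a chain of single poor-leaf folds, with the folded arcs as its edges. The first assertion—that $G'$ is a weak equilibrium graph—follows by iterating the fact already recorded in the excerpt that a single poor-leaf fold preserves the weak equilibrium property. One should also note that a ``subtree fold'' licensed by Lemma \ref{l:poorleavefolding} is nothing more than a bottom-up sequence of single poor-leaf folds, so the two formulations agree and the corollary's set-up is coherent.

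For the diameter bound, the main technical step is to verify that, as an induced subgraph of $G$, each $T_z$ satisfies the hypotheses of Lemma \ref{l:poorleavefolding}. Every arc of $T_z$ is oriented away from $z$, since each fold absorbs a child across an outgoing arc from the absorbing vertex. And no non-root vertex $v \in T_z$ is adjacent in $G$ to anything outside $T_z$: at the moment $v$ was folded, all of its descendants in $T_z$ had already been removed and $v$ had degree one in the intermediate graph, so $v$'s unique $G$-neighbor outside its subtree is its parent in $T_z$. Once this structural verification is in place, Lemma \ref{l:poorleavefolding} delivers $\mathrm{height}(T_z) \leq 1 + \log w(T_z) \leq 1 + \log w(G)$, where the last inequality uses that weight is preserved by folds so $w(T_z) \leq w(G)$.

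With this in hand, I would next argue that distances between vertices of $V(G')$ are unchanged, i.e.\ $\dist_{G'}(u,v) = \dist_G(u,v)$ for $u,v \in V(G')$. The inequality $\dist_{G'} \geq \dist_G$ is immediate since $G'$ is a subgraph. For the converse, any shortest $u$--$v$ path in $G$ that ventures into the non-root part of some $T_z$ must both enter and leave $T_z$ through the cut-vertex $z$ (since non-root vertices of $T_z$ have no outside neighbours), so its excursion into $T_z$ forms a closed walk at $z$ that can be excised without lengthening the path. Finally, for a diametral pair $x, y \in V(G)$, set $x' = \phi(x), y' = \phi(y) \in V(G')$, where $\phi$ maps each vertex to the root of its history tree; then the triangle inequality in $G$ gives
\[ \dist_G(x',y') \;\geq\; \dist_G(x,y) - \dist_G(x,x') - \dist_G(y,y') \;\geq\; \diam(G) - 2(1+\log w(G)), \]
and since distances between $x', y' \in V(G')$ are preserved, this yields $\diam(G') \geq \diam(G) - O(\log w(G))$, which is the intended meaning of the stated equality (the other direction $\diam(G') \leq \diam(G)$ being trivial). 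The principal obstacle, as flagged, is the careful structural verification of the hypotheses of Lemma \ref{l:poorleavefolding} for each history tree; everything else is bookkeeping and a triangle-inequality application.
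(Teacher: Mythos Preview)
Your proposal is correct and is the natural elaboration of what the paper leaves implicit: the paper states this corollary without proof, merely noting before it that ``if the conditions of the above lemma hold, then one can fold the whole subtree $T$ into the vertex $z$'' and that ``folding this subtree does not decrease the diameter of $G$ significantly.'' Your history-tree construction, verification of the hypotheses of Lemma~\ref{l:poorleavefolding} for each $T_z$, distance-preservation argument, and triangle-inequality finish are exactly the details one would supply, and they match the paper's intended route.
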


Handling rich leaves is easy, as shown by the following lemma.

\begin{lemma} \label{l:richleavesdis}
Let $G$ be a weighted weak equilibrium graph. Then the distance between any two rich leaves of $G$ is at most 2.
\end{lemma}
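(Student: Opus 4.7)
The plan is to derive a contradiction by assuming two rich leaves $\ell_1,\ell_2$ satisfy $k := \dist(\ell_1,\ell_2) \geq 3$. Let $p_i$ denote the unique vertex to which $\ell_i$ owns its arc, so, because $\ell_i$ is a leaf, $p_i$ is its only neighbor. If any two of $\ell_1,\ell_2,p_1,p_2$ coincided then $\dist(\ell_1,\ell_2)$ would be at most $2$; so under the assumption these four vertices are distinct, the arcs $\arc{\ell_1 p_2}$ and $\arc{\ell_2 p_1}$ are not already in $G$, and $\dist(p_i,\ell_j) = k-1$ whenever $i \neq j$.

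The key idea is to exploit two symmetric single-edge swaps: have $\ell_1$ replace $\arc{\ell_1 p_1}$ with $\arc{\ell_1 p_2}$, and have $\ell_2$ replace $\arc{\ell_2 p_2}$ with $\arc{\ell_2 p_1}$. Because a leaf is never an internal vertex of any shortest path, deleting $\ell_i$ from $G$ preserves every other pairwise distance; so after the first swap one has $\dist'(\ell_1, v) = 1 + \dist(p_2, v)$ for all $v \neq \ell_1$, and similarly for the second swap. Writing the weak-equilibrium non-improvement inequality for each swap then yields
\begin{align*}
\sum_{v \neq \ell_1} w(v)\bigl[\dist(p_2, v) - \dist(p_1, v)\bigr] &\geq 0, \\
\sum_{v \neq \ell_2} w(v)\bigl[\dist(p_1, v) - \dist(p_2, v)\bigr] &\geq 0.
\end{align*}

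Adding these two inequalities, the contributions from every $v \notin \{\ell_1,\ell_2\}$ cancel in pairs, and only the ``missing index'' terms survive: the $v=\ell_2$ term from the first sum and the $v=\ell_1$ term from the second. Substituting $\dist(p_1,\ell_1) = \dist(p_2,\ell_2) = 1$ and $\dist(p_2,\ell_1) = \dist(p_1,\ell_2) = k-1$ collapses the total to $(w(\ell_1)+w(\ell_2))(2-k) \geq 0$. Since all vertex weights are positive integers, this forces $k \leq 2$, contradicting $k \geq 3$. I foresee no serious obstacle: the delicate part is simply the bookkeeping that the swap is admissible (the new arc is not already owned by $\ell_i$, and removing a pendant vertex does not alter the rest of the graph) and keeping careful track of which residual terms survive after summing the two inequalities.
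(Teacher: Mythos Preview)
Your proof is correct and uses essentially the same idea as the paper's: swap a rich leaf's unique outgoing arc so that it points at the other rich leaf's neighbor, then compare costs. The only difference is cosmetic---the paper performs one such swap after a WLOG assumption comparing the two leaves' costs, whereas you perform both symmetric swaps and add the resulting inequalities.
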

\begin{proof}
Let $u,v$ be two rich leaves.
By symmetry, we may assume that
$$c(u)-\dist(u,v) w(v) \leq c(v) - \dist(u,v)w(u).$$
Since $u$ is a rich leaf, it owns an arc $\arc{up}$.
If $v$ changes its strategy, by replacing its outgoing arc with the arc $\arc{vp}$, then its cost becomes
$$c(u) - \dist(u,v)w(v) + 2w(u) \leq  c(v) - \dist(u,v)w(u) + 2w(u) = c(v) + (2-\dist(u,v))w(u).$$
Since $v$ is already playing its best response, $\dist(u,v) \leq 2$, and the proof is complete.
\end{proof}

To handle the vertices of degree 2, which have one child, the following lemma will be used.

\begin{lemma} \label{l:degree2fold}
Let $G$ be a weighted weak equilibrium graph and $P$ be a path in $U(G)$ such that for every two vertices $u$ and $v$ in $P$,
the $(u,v)$-path along $P$ is the unique shortest $(u,v)$-path
(which implies, in particular, that $P$ is an induced subgraph of $U(G)$).
Then the number of edges $uv \in E(P)$ such that both $u$ and $v$ have degree 2 is $O(\log w(P))$.
\end{lemma}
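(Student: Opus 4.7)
The plan is to set up a halving argument driven by weak equilibrium. First, classify each thin edge $v_iv_{i+1}$ as \emph{rightward} if $\arc{v_iv_{i+1}}$ is an arc of $G$ and \emph{leftward} otherwise; braces cannot arise at an interior vertex of $P$, since a degree-$2$ brace endpoint would devote both incidences to the brace, leaving no way for $P$ to enter and exit the vertex. By symmetry it suffices to bound the number of rightward thin edges by $O(\log w(P))$. Enumerate these along $P$ as $v_{k_1}v_{k_1+1},\ldots,v_{k_s}v_{k_s+1}$ with $k_1<\cdots<k_s$, and define the potential $\Phi(i):=\sum_{m=i}^{L}w(v_m)$.

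The central claim is the halving inequality $\Phi(k_{j+1}+1) \leq \tfrac{1}{2}\Phi(k_j+1)$ for every $j<s$. Iterating yields $\Phi(k_s+1) \leq w(P)/2^{s-1}$, and since weights are positive integers $\Phi(k_s+1)\geq 1$, whence $s\leq 1+\log_2 w(P)$. To prove the halving, I apply weak equilibrium to the swap at $v_{k_j}$ that replaces $\arc{v_{k_j}v_{k_j+1}}$ by $\arc{v_{k_j}v_{k_{j+1}+1}}$; this is legal because $v_{k_j}$ has degree $2$ with neighbors only $v_{k_j-1}$ and $v_{k_j+1}$. Setting $k=k_j$ and $l=k_{j+1}+1$, the isometric property of $P$ gives that after the swap, the distance from $v_k$ to $v_m$ is unchanged for $m\leq k$, at most $1+(l-m)$ for $k<m<l$, and at most $1+(m-l)$ for $m\geq l$. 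Because every interior vertex of the thin run containing $v_{k_j}$ has both its neighbors on $P$, every off-path vertex reachable from $v_{k_j}$ must traverse the left or right boundary vertex of that run; left-boundary vertices have unchanged distance, while (whenever $l$ lies inside the run) right-boundary vertices see their distance decrease by at least $l-k-1=k_{j+1}-k_j$ through the new shortcut. Summing the signed changes and invoking $\Delta c(v_{k_j})\geq 0$ yields an inequality with common factor $k_{j+1}-k_j$, which after cancellation reads
\[
\Phi(k_j+1)-\Phi(k_{j+1}+1) \geq \Phi(k_{j+1}+1),
\]
i.e., the halving.

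The principal technical obstacle is that the clean distance-decrease bound for right-exit off-path vertices requires $l=k_{j+1}+1$ to lie inside the thin run containing $v_{k_j}$; when $v_{k_{j+1}}$ belongs to a later thin run, right-exit distances may only decrease weakly or not at all. I plan to handle this by chaining the halvings within each thin run directly, and at each boundary between runs containing rightward owners inserting an auxiliary in-run swap (sending the last rightward owner $v_{k_{j,\mathrm{last}}}$'s arc to the run's right boundary $v_{b_j}$, legal when $k_{j,\mathrm{last}}\leq b_j-2$) to secure one additional halving $\Phi(b_j)\leq\Phi(k_{j,\mathrm{last}}+1)/2$; monotonicity $\Phi(a_{j+1})\leq\Phi(b_j)$ then bridges into the next run. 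The degenerate case $k_{j,\mathrm{last}}=b_j-1$ is resolved by applying weak equilibrium to an arc owned by $v_{b_j}$. Running the symmetric argument on leftward edges and adding the two bounds then yields the stated $O(\log w(P))$ bound.
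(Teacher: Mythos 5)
Your high-level plan --- split the thin edges by arc direction and show that the suffix weight $\Phi$ at least halves from one rightward thin edge to the next --- is exactly the paper's strategy, and the halving inequality you state is the correct key fact. The problem is the swap you use to prove it. By rerouting $\arc{v_{k_j}v_{k_j+1}}$ all the way to $v_{k_{j+1}+1}$, you create distance increases at every vertex strictly between $v_{k_j}$ and $v_{k_{j+1}+1}$, including off-path vertices hanging off any intermediate vertex of degree at least $3$; since your $\Phi$ counts only path weights, these increases are uncontrolled, which is what forces the run-based case analysis. That analysis does not close. In the configuration where every thin run contains exactly one rightward thin edge whose right endpoint \emph{is} the run's right boundary (e.g., runs of length two separated by high-degree vertices), the within-run chaining produces zero halvings, your auxiliary in-run swap is unavailable (it needs $k_{j,\mathrm{last}}\le b_j-2$), and your fallback fails on two counts: $v_{b_j}$ need not own any arc at all (its edge to $v_{b_j-1}$ is the arc owned by $v_{b_j-1}$, and its edge to $v_{b_j+1}$ may be owned by $v_{b_j+1}$), and even when $v_{b_j}$ does own $\arc{v_{b_j}v_{b_j+1}}$, rerouting that arc past $v_{b_j+1}$ increases the distances to the off-path vertices attached to $v_{b_j+1}$ (a vertex of degree $\ne 2$), and those weights again do not appear in $\Phi$. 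So in that configuration your argument yields no bound.

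The repair is to always use the one-step swap, which is essentially the paper's entire proof: $v_{i_j}$ replaces $\arc{v_{i_j}v_{i_j+1}}$ by $\arc{v_{i_j}v_{i_j+2}}$. Because $v_{i_j+1}$ has degree $2$ with both neighbours on $P$, it is the \emph{only} vertex whose distance to $v_{i_j}$ can increase (from $1$ to $2$; any old shortest path through the deleted edge must continue to $v_{i_j+2}$, so no other distance grows), while every $v_k$ with $k\ge i_j+2$ gets closer by at least $1$. Weak equilibrium then gives $w(v_{i_j+1})\ge\sum_{k\ge i_j+2}w(v_k)$, i.e.\ $\Phi(i_j+1)\ge 2\Phi(i_j+2)\ge 2\Phi(i_{j+1}+1)$ --- your halving --- with no runs, no off-path bookkeeping, and no degenerate cases.
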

\begin{proof}
The proof is similar to the proof of Theorem~\ref{t:treepoa}.
Let $P = v_0v_1\dots v_d$.
Suppose that the set of edges of $P$ whose endpoints have degree 2 is
$$S = \{v_{i_1}v_{i_1+1},v_{i_2}v_{i_2+1},\dots,v_{i_{m}}v_{i_{m}+1}\}.$$
At least half of the arcs in $S$ are in the same direction along $P$.
By symmetry, we may assume that these are the arcs
$\arc{v_{i_1}v_{i_1+1}},\arc{v_{i_2}v_{i_2+1}},\dots,\arc{v_{i_{t}}v_{i_{t}+1}}$,
where $t \geq m/2$.
For $1 \leq j < t$, if $v_{i_{j}}$ replaces its arc $\arc{v_{i_{j}} v_{i_{j}+1}}$ with the arc $\arc{v_{i_{j}} v_{i_{j}+2}}$,
then its distances to vertices in $\{v_{i_{j}+2},\dots,v_{d}\}$ decreases by one,
and its distance to $v_{i_{j}+1}$ increases by one.
Since $v_{i_{j}}$ is playing its best response, we have
\begin{displaymath}
w(v_{i_{j+1}}) \geq \sum_{k=i_{j}+2}^{d}{w(v_{k})} \geq \sum_{l=j+1}^{t}{w(v_{i_{l}+1})}
\end{displaymath}
for all $1 \leq j < t$.
Now, by the same reason as the one in the last part of the proof of the Theorem~\ref{t:treepoa}, we have $m \leq 2t = O(\log w(P))$.
\end{proof}

Now we  prove Theorem~\ref{thm:first_step}.

\begin{proof} [Proof of Theorem~\ref{thm:first_step}.]
First, make $G$ weighted by setting $w(x)=1$ for all vertices $x$. Thus $w(G) = n$.
Second, perform a sequence of subtree folds on $G$ until no poor leaves remain
(recall that a leaf is a vertex that has degree 1 in $G$).
By Corollary~\ref{cor:poorleavefolding}, the height of $T$ changes
by $O(\log w(G))$.

Third, for each edge $xy \in E(T)$ such that both $x$ and $y$ have degree 2 in $T$,
contract the edge, and repeat until no such edge exists.
We claim that doing all these contractions changes the height of $T$ by $O(\log w(G))$.
Indeed, let $v$ be any vertex in $T$ and $P$ be the unique $(u,v)$-path in $T$.
Then $P$ satisfies the conditions of Lemma~\ref{l:degree2fold},
so by this lemma, doing all these contractions
changes the distance between $u$ and $v$ by at most $O(\log w(P)) = O(\log w(G))$.

Note that since the graph was a weak equilibrium right before doing the contractions,
by Lemma~\ref{l:richleavesdis} the distance between any two rich leaves was at most 2.
The contractions do not increase the distances and do not create new leaves, so in the
final graph, there is at most one vertex in $G$ that is adjacent to leaves.
Hence the height of the obtained tree $T'$ is $O(\log |V(T')|) = O(\log w(G))$.
Therefore, the height of the original tree $T$ is $O(\log w(G))=O(\log n)$ as well.
\end{proof}

For the rest of the section, all graphs are unweighted.
The rest of the proof is similar to the proof of Theorem~9 of~\cite{adhl}.

\begin{lemma} \label{lem:frienddecrease}
Let $u,v,x$ be vertices of a graph $G$ such that the arc $\arc{uv}$ is not in $G$.
Assume that adding the arc $\arc{uv}$ to $G$ decreases the cost of $u$ by $s$, where $s > n\dist(x,u)$.
Then adding the arc $\arc{xv}$ to $G$ decreases the cost of $x$ by at least $s-n\dist(x,u)$.
\end{lemma}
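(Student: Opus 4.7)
The plan is to track the decrease in distance from the relevant source to each individual vertex, and use the triangle inequality to transfer the savings from $u$ over to $x$, losing at most $\dist(x,u)$ per vertex in the process.

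Let $d = \dist(x,u)$, and for each vertex $w$ let $t_w = \dist_G(u,w) - 1 - \dist_G(v,w)$, so that adding the arc $\arc{uv}$ changes $\dist(u,w)$ from $\dist_G(u,w)$ to $\min\{\dist_G(u,w),\,1+\dist_G(v,w)\}$, and therefore the decrease at $w$ is exactly $\delta_w := \max\{0,t_w\}$. By hypothesis, $\sum_w \delta_w = s$. Analogously, adding $\arc{xv}$ to $G$ decreases $\dist(x,w)$ by $\epsilon_w := \max\{0,\dist_G(x,w)-1-\dist_G(v,w)\}$, and the total decrease in $x$'s cost equals $\sum_w \epsilon_w$. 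So the lemma reduces to the per-vertex inequality $\epsilon_w \geq \delta_w - d$.

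To establish this inequality, I would invoke the triangle inequality $\dist_G(x,w) \geq \dist_G(u,w) - d$, which gives $\dist_G(x,w) - 1 - \dist_G(v,w) \geq t_w - d$, and hence $\epsilon_w \geq \max\{0, t_w-d\}$. A short case check finishes the job: if $t_w \leq d$, then $\delta_w - d \leq t_w - d \leq 0 \leq \epsilon_w$ (using $\delta_w \leq \max\{0,t_w\} \leq \max\{0,d\}=d$ when $t_w\leq d$), while if $t_w > d$, then $\delta_w = t_w$ and $\epsilon_w \geq t_w - d = \delta_w - d$. Either way $\epsilon_w \geq \delta_w - d$.

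Summing this over all $n$ vertices yields $\sum_w \epsilon_w \geq \sum_w \delta_w - nd = s - nd$, which is exactly the claimed lower bound on the decrease in $x$'s cost. The hypothesis $s > n\dist(x,u)$ plays no role in the derivation itself; it only serves to guarantee that the bound is positive, confirming that $x$ actually benefits from acquiring the arc to $v$. The whole argument is essentially a one-liner about how the triangle inequality controls the loss when we port a distance-saving shortcut from one source vertex to a nearby one, so I do not anticipate any real obstacle beyond being careful with the $\max\{0,\cdot\}$ clipping in the case check.
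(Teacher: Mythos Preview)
Your proof is correct and follows essentially the same approach as the paper: define the per-vertex improvement for $u$ and for $x$, use the triangle inequality $\dist_G(x,w)\ge\dist_G(u,w)-\dist(x,u)$ to show that $x$'s improvement at each vertex is at least $u$'s improvement minus $\dist(x,u)$, and sum over (at most) $n$ vertices. The only cosmetic difference is that the paper restricts the sum to the set $W=\{w:\delta_w>0\}$ and bounds $|W|\le n$, whereas you carry the $\max\{0,\cdot\}$ through a short case split; both arrive at the same bound.
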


\begin{proof}
For every vertex $w$, let $\improve_u(w)$ be the amount $u$ gets closer to $w$ by adding the arc $\arc{uv}$.
Similarly, let $\improve_x(w)$ be the amount $x$ gets closer to $w$ by adding the arc $\arc{xv}$.
Let $\dist^{new}(x,w)$ be the distance between $x$ and $w$ in $G\cup\arc{xv}$.
Let $W$ be the set of vertices $w$ with $\improve_u(w)>0$.
For all $w\in W$ we have
\begin{align*}
\improve_x(w) = \dist (x,w) - \dist^{new}(x,w) \geq & (\dist(u,w) - \dist(u,x)) - (1+\dist(v,w)) \\
= & \dist(u,w) - (1+\dist(v,w)) - \dist(u,x) \\
= & \improve_u(w)- \dist(u,x).
\end{align*}
Thus
\begin{align*}
\sum_{w\in V} \improve_x(w) \geq \sum_{w\in W} \improve_x(w) \geq  &
\sum_{w\in W} \left[\improve_u(w)- \dist(u,x)\right] \\
= & s - |W|\dist(u,x) \geq s - n\dist(u,x),
\end{align*}
and the proof is complete.
\end{proof}

\begin{lemma} \label{l:nlogn}
Let $G$ be a connected equilibrium graph in the SUM version that is not a tree.
Given any vertex $u$, there is an arc $\arc{xy}$ with $\dist(x,u) = O(\log n)$
and whose removal increases the cost of $x$ by at most $O(n\log n)$.
\end{lemma}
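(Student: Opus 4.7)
The plan is to locate a short cycle near $u$ using Theorem~\ref{thm:first_step}, and then take $\arc{xy}$ to be any arc on that cycle: the bounded length of the cycle will simultaneously give $\dist(x,u)=O(\log n)$ and bound the cost increase that results from removing the arc.

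First, I would identify the cycle. Since $G$ is connected but not a tree, the subgraph induced on $V(G)=B_n(u)$ is not a tree. Let $r^*$ be the smallest positive integer such that the subgraph induced by $B_{r^*}(u)$ is not a tree; then $B_{r^*-1}(u)$ still induces a tree, so Theorem~\ref{thm:first_step} gives $r^*=O(\log n)$. I would then fix a BFS spanning tree $\widetilde{T}$ of the (connected) induced subgraph on $B_{r^*}(u)$, rooted at $u$; since this induced subgraph has strictly more edges than its spanning tree, there must exist an edge $\{x,y\}$ of $U(G)$ with both endpoints in $B_{r^*}(u)$ that is not in $\widetilde{T}$. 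The unique $\widetilde{T}$-path between $x$ and $y$ passes through their lowest common ancestor and therefore has length at most $\dist(u,x)+\dist(u,y)\leq 2r^*$, so adjoining $\{x,y\}$ yields a cycle $C\subseteq B_{r^*}(u)$ of length at most $2r^*+1=O(\log n)$.

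Next, I would pick $\arc{xy}$ to be any arc of $G$ whose underlying edge lies on $C$; by construction $\dist(x,u)\leq r^*=O(\log n)$. Let $P$ be the path $C\setminus\{xy\}$, which has length at most $2r^*$. If $\{x,y\}$ is a brace then the arc $\arc{yx}$ remains after the removal and $U(G)$ is unchanged, so the cost of $x$ does not change at all. Otherwise, deleting $\arc{xy}$ also deletes the edge $xy$ from $U(G)$, and I would show that for every vertex $w$,
$$\dist_{U(G)\setminus xy}(x,w)\leq \dist_{U(G)}(x,w)+|P|.$$
Indeed, any shortest $x$-$w$ path in $U(G)$ is simple and begins at $x$, so it can traverse the edge $xy$ only as its first step; if it does, the remainder is a valid $y$-$w$ path in $U(G)\setminus xy$, and prepending the detour $P$ gives an $x$-$w$ walk of the claimed length, whereas if the shortest path avoids $xy$ it survives unchanged. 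Summing this bound over all $n$ choices of $w$ shows that the cost of $x$ increases by at most $n\,|P|=O(n\log n)$.

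The main hurdle is the cycle-extraction step, where one must combine Theorem~\ref{thm:first_step} (to control the depth at which the induced subgraph first ceases to be a tree) with the BFS argument (to convert the resulting non-tree edge into a cycle of length $O(\log n)$). Once the short cycle is in hand, the cost-increase bound is a routine single-cycle rerouting calculation.
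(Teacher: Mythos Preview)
Your proposal is correct and follows essentially the same approach as the paper: use Theorem~\ref{thm:first_step} to bound the first radius at which $B_r(u)$ acquires a cycle, extract a non-tree edge from a BFS tree to obtain a cycle of length $O(\log n)$, and bound the cost increase by rerouting along the rest of the cycle. The only cosmetic difference is that the paper chooses $\arc{xy}$ to be the non-tree edge itself (so the detour is always the tree path through the lowest common ancestor), whereas you allow any arc on the cycle; both choices yield the same bounds.
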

\begin{proof}
Let $r$ be the smallest positive integer such that the subgraph induced by $B_{r+1}(u)$ has a cycle.
Note that $r$ is well defined since $G$ is connected and is not a tree.
Theorem~\ref{thm:first_step} gives $r=O(\log n)$.
Consider a breadth-first search from $u$ in $U(G)$, and let $T$ denote the top $r+1$ levels of the BFS tree,
from level 0 (just $u$) to level $r+1$.
Since $B_{r+1}(u)$ has a cycle, there is an edge ${xy}\notin E(T)$ with $x,y\in V(T)$.
Assume by symmetry that the arc direction is from $x$ to $y$.
Clearly $\dist(x,u) \leq r+1 = O(\log n)$.
If the arc $\arc{xy}$ is deleted, then the distance between $x$ and any vertex
increases by at most $1+2r$,
since the shortest path can use the alternate path in $T$ from $x$ to the lowest common ancestor of $x$ and $y$ in $T$,
and then to $y$, instead of using $\arc{xy}$.
Therefore, removing $\arc{xy}$ increases the cost of $x$ by at most $n(1+2r) = O(n\log n)$.
\end{proof}

The previous lemma implies that there exist constants $p,q>0$ such that if $G$ is a connected non-tree equilibrium graph,
then for any $u \in V$,
there is an arc $\arc{xy}$ with $\dist(x,u) \leq p \log n$ and whose removal increases the cost of $x$ by at most $q n\log n$.
Using it together with Lemma~\ref{lem:frienddecrease}, we obtain the following corollary.

\begin{corollary} \label{l:orderrr}
In a connected non-tree  equilibrium graph $G$  in the SUM version,
the addition of any arc $\arc{uv}$ decreases the cost of $u$ by at most $(p+q+1) n \log n$.
\end{corollary}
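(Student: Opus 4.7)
The plan is to argue by contradiction. Suppose that, for some arc $\arc{uv} \notin A(G)$, adding $\arc{uv}$ to $G$ decreases the cost of $u$ by some $s > (p+q+1)\,n\log n$. I will exhibit a profitable single-arc swap for some other vertex, contradicting the assumption that $G$ is an equilibrium.

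First I will apply Lemma~\ref{l:nlogn} to the vertex $u$ to obtain an arc $\arc{xy} \in A(G)$ with $\dist(x,u) \leq p\log n$ whose deletion from $G$ increases $c(x)$ by at most $qn\log n$. I emphasize that, as in the proof of that lemma, this bound comes from a concrete alternate $(x,y)$-path of length $1+2r = O(\log n)$ sitting inside the BFS tree $T$ of $u$; this path uses only arcs of $G$ and, in particular, does not use $\arc{xv}$.

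Next, since $s > (p+q+1)n\log n > pn\log n \geq n\dist(x,u)$, the hypothesis of Lemma~\ref{lem:frienddecrease} is met, and the lemma yields that adding $\arc{xv}$ to $G$ decreases $c(x)$ by at least
$$s - n\dist(x,u) \;>\; (p+q+1)n\log n - pn\log n \;=\; (q+1)\,n\log n.$$

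Finally, I consider the swap in which $x$ replaces its owned arc $\arc{xy}$ with $\arc{xv}$, producing $G' = G - \arc{xy} + \arc{xv}$. Since the alternate path above still lies inside $G + \arc{xv}$, the argument of Lemma~\ref{l:nlogn} applies verbatim in $G + \arc{xv}$: deleting $\arc{xy}$ from $G + \arc{xv}$ raises each distance from $x$ by at most $1 + 2r$, and hence raises $c(x)$ by at most $qn\log n$. Combining,
$$c_{G'}(x) \;\leq\; c_{G+\arc{xv}}(x) + qn\log n \;\leq\; c_G(x) - (q+1)n\log n + qn\log n \;=\; c_G(x) - n\log n,$$
so $x$ strictly benefits from the swap, contradicting the equilibrium property. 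The only genuinely subtle point is this last transport step: the alternate-path bound of Lemma~\ref{l:nlogn} must remain valid in $G + \arc{xv}$ rather than only in $G$. This is immediate because the alternate path lives inside the BFS tree $T \subseteq G$ and is therefore unaffected by the addition of $\arc{xv}$; everything else amounts to bookkeeping with the constants $p$ and $q$.
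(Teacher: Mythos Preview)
Your proof is correct and follows essentially the same route as the paper: assume a large gain from adding $\arc{uv}$, invoke Lemma~\ref{l:nlogn} to locate a cheaply removable arc $\arc{xy}$ near $u$, transfer the gain to $x$ via Lemma~\ref{lem:frienddecrease}, and conclude that the swap $\arc{xy}\mapsto\arc{xv}$ is profitable for $x$. You are in fact a bit more careful than the paper in justifying the transport step---explicitly noting that the alternate $(x,y)$-path lives inside the BFS tree $T\subseteq G$ and hence survives in $G+\arc{xv}$---whereas the paper simply asserts that deleting $\arc{xy}$ from $G\cup\arc{xv}$ costs at most $qn\log n$.
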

\begin{proof}
Suppose for the sake of contradiction that the addition of $\arc{uv}$ decreases the cost of $u$ by more than $ (p+q+1)n \log n $.
By Lemma~\ref{l:nlogn}, there is an arc $\arc{xy}$ with $\dist(x,u) \leq p  \log n$,
and whose removal increases the cost of $x$ by at most $q n \log n$.
We show that if $x$ replaces the arc $\arc{xy}$ with the arc $\arc{xv}$,
then its cost decreases, which contradicts the fact that $G$ is an equilibrium graph.
By Lemma~\ref{lem:frienddecrease}, inserting the arc $\arc{xv}$ decreases the cost of $x$ by at least $(p+q+1) n \log n - p n \log n$,
Now, deleting the arc $\arc{xy}$ from the graph $G\cup \arc{xv}$ increases the cost of $x$ by at most $q n \log n$.
This completes the proof since
$$- (p+q+1) n \log n + p n \log n  + q n \log n < 0. \qedhere$$
\end{proof}

Now we are ready to prove the main theorem of this section,
which implies that the price of anarchy in the SUM version is $2^{O(\sqrt{\log n})}$.

\begin{theorem} \label{t:sumupperbound}
Let $G$ be a connected equilibrium graph in the SUM version.
Then the diameter of $G$ is $2^{O(\sqrt{\log n})}$.
\end{theorem}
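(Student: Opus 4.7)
The plan combines two engines established above: Corollary~\ref{l:orderrr}, which bounds at $c_0 n \log n$ the cost improvement any vertex $u$ can gain by adding a single arc $\arc{uv}$; and Theorem~\ref{thm:first_step}, which bounds at $O(\log n)$ the radius of any ball inducing a tree. The argument follows the blueprint of the proof of Theorem~9 in~\cite{adhl}. If $G$ is itself a tree, then $B_{\mathrm{ecc}(v)}(v) = V(G)$ is a tree for every vertex $v$, so Theorem~\ref{thm:first_step} yields $\mathrm{ecc}(v) = O(\log n)$, whence $\diam(G) = O(\log n)$. From here on assume $G$ is not a tree, so Corollary~\ref{l:orderrr} is in force.

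Fix any vertex $v$ and write $B_r := B_r(v)$. The core step is to establish a ball-growth inequality of the form
\[
|B_{r+1}| \;\geq\; \Bigl(1 + \Omega\!\bigl(\tfrac{\log r}{r}\bigr)\Bigr) |B_r|
\]
valid for every $r$ up to roughly half the eccentricity of $v$. Iterating this inequality over $r = 1, 2, \ldots$ and telescoping the resulting logarithmic differences yields $\log |B_r| = \Omega((\log r)^2)$; combined with the trivial bound $|B_r| \leq n$, this forces $r = 2^{O(\sqrt{\log n})}$. Taking $r = \mathrm{ecc}(v)$ then gives $\diam(G) \leq 2\,\mathrm{ecc}(v) = 2^{O(\sqrt{\log n})}$.

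To establish the ball-growth inequality, I would argue by contradiction: suppose the growth ratio $|B_{r+1}|/|B_r|$ drops below $1 + c\log r / r$ at some radius $r$. By Corollary~\ref{l:orderrr} applied to a vertex $u$ at distance about $2r$ from $v$, the improvement available to $u$ from adding $\arc{uv}$ is at least $\Omega(r\,|B_{r/2}|)$, which forces $|B_{r/2}| \leq O(n\log n / r)$; combined with the sluggish growth, most of $B_r$ must then be concentrated in a narrow BFS annulus that, after pruning the few extra edges it contains, houses a long tree-induced subball of $v$, and applying Theorem~\ref{thm:first_step} to this subball produces the required contradiction. Lemma~\ref{lem:frienddecrease} will be invoked to transfer the arc-improvement bound from the far vertex $u$ to nearer candidates whenever no single $u$ at the right distance is ideally placed. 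The hard part will be quantifying the ``long tree-induced subball'' step carefully enough to produce the crucial $\log r$ factor in the growth; without it, the calculation would only yield the weaker $O(\sqrt{n \log n})$ bound that is already implicit in Corollary~\ref{l:orderrr} alone.
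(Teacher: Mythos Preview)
Your high-level plan is sound and indeed follows the blueprint of~\cite{adhl}: use Corollary~\ref{l:orderrr} to drive a ball-growth recursion and iterate. But the specific growth inequality you aim for, and the contradiction you sketch to prove it, do not go through as written.

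The gap is in the contradiction step. You assume $|B_{r+1}|/|B_r| < 1 + c\log r / r$ at a single radius $r$, and from Corollary~\ref{l:orderrr} you correctly extract $|B_{r/2}(v)| = O(n\log n / r)$. But these two facts together do not force ``most of $B_r$ concentrated in a narrow annulus'', nor do they produce a tree-like structure anywhere. Slow growth at one radius says nothing about growth at other radii, and an upper bound on $|B_{r/2}|$ is compatible with $B_r$ having arbitrarily many cycles. There is no mechanism in your sketch by which Theorem~\ref{thm:first_step} becomes applicable a second time; you acknowledge this as ``the hard part'', but it is not merely hard---it is not clear any such argument exists. Without it, as you note, Corollary~\ref{l:orderrr} alone gives only an $O(\sqrt{n\log n})$ bound.

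The paper avoids this by working at scale~$4$ rather than scale~$1$, and by replacing the single-step growth inequality with a packing argument. Set $f(k) = \min_u |B_k(u)|$ and prove
\[
f(4k) \;\ge\; \min\Bigl\{\tfrac{n+1}{2},\ \tfrac{k\,f(k)}{C\log n}\Bigr\}.
\]
The proof: fix $u$ with $|B_{3k}(u)| \le n/2$, and let $T$ be a maximal $(2k{+}1)$-separated subset of the sphere $\{w:\dist(u,w)=3k\}$. Every vertex beyond distance $3k$ from $u$ lies within distance $\dist(u,w)-k$ of some $t\in T$, so by pigeonhole some $t\in T$ has at least $n/(2|T|)$ such vertices, each of which gets $\ge k-1$ closer to $u$ upon adding $\arc{ut}$. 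Corollary~\ref{l:orderrr} then forces $|T| \ge k/(C\log n)$. Since the balls $B_k(t)$ for $t\in T$ are pairwise disjoint and contained in $B_{4k}(u)$, the recursion follows. Iterating from $k_0 = 2^{\sqrt{\log n}}$ (so that $k_0/(C\log n) = 2^{\Omega(\sqrt{\log n})}$) gives $f(4^i k_0) > n/2$ after $i = O(\sqrt{\log n})$ steps, and hence $\diam(G) = 2^{O(\sqrt{\log n})}$. This route uses Theorem~\ref{thm:first_step} only once, inside Corollary~\ref{l:orderrr}, and never needs to locate a tree-induced subball explicitly.
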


\begin{proof}
If $G$ is a tree, then by Theorem~\ref{t:treepoa} its diameter is
$$O(\log n) = 2^{O(\log \log n)} = 2^{O(\sqrt{\log n})},$$
so we may assume that $G$ is not a tree.

Define $f(k) = \min_u |B_k(u)|$.
First, we show that
\begin{equation}
\label{eq:bk}
f(4k) \geq \min \left \{ \frac{n+1}{2}, \frac{k f(k)}{4(p+q+1) \log n} \right \}.
\end{equation}
Fix a vertex $u$, and assume that $f(4k) \leq n/2$.
Then certainly $f(3k) \leq n/2$.
Let $T$ be a maximal set of vertices at distance exactly $3k$ from $u$
subject to the distance between any pair of vertices in $T$ being at least $2k+1$.
We claim that, for every vertex $v$ of distance more than $3k$ from $u$,
the distance of $v$ from the set $T$ is at most $\dist(u,v)-k$.
Indeed, $v$ has distance $\dist(u,v)-3k$ to some vertex at distance exactly $3k$ from $u$,
and any such vertex is within distance $2k$ of some vertex of $T$, by the maximality of $T$.

Because we assumed that at least $n/2$ vertices have distance more than $3k$ from $u$,
by the pigeonhole principle, there are at least $n/(2|T|)$ such vertices $v$ whose distance from the same $t\in T$ is at most $\dist(u,v)-k$.
Adding the arc $\arc{ut}$ decreases the distances between $u$ and such vertices $v$ by $k-1$,
so improves the cost of $u$ by at least
$$(k-1) \frac {n}{2|T|} \geq \frac {kn}{4|T|}.$$
By Corollary~\ref{l:orderrr}, this improvement is at most $(p+q+1)n \log n$, so we find that
$$|T| \geq \frac{k}{4(p+q+1) \log n}.$$

Now, the sets $\{B_k(t) : t \in T\}$ are all pairwise disjoint,
all lie within distance $4k$ of $u$,
and each of them has at least $f(k)$ vertices (by the definition of $f$).
Thus
$$f(4k) \geq f(k) \frac{k}{4(p+q+1) \log n}$$
and (\ref{eq:bk}) holds.

Now we prove the theorem.
First,
$$f(2^{\sqrt{\log n}}) \geq 2^{\sqrt{\log n}}$$
simply because $G$ is connected.
Let $k$ be the smallest nonnegative integer for which
$f ( 2^{\sqrt{\log n}} 4^{k} ) > n/2$.
By (\ref{eq:bk}), for every $1\leq i<k$ we have
$$\frac{f({2^{\sqrt{\log n}} 4^i}) }{f({2^{\sqrt{\log n}} 4^{i-1}})} \geq \frac{2^{\sqrt{\log n}} 4^{i-1}}{4(p+q+1)\log n} = 2^{\Omega(\sqrt{\log n})}.$$
One can prove by induction on $i$ that for all $1\leq i<k$,
$${f({2^{\sqrt{\log n}} 4^i}) } \geq 2^{\Omega(i\sqrt{\log n})}.$$
But, since
$${f({2^{\sqrt{\log n}} 4^{k-1}} )} \leq n/2 = 2^{\log n - 1},$$
we have $k = O(\sqrt {\log n})$.
Recall that we have
$f(2^{\sqrt{\log n}} 4^k) > n/2$.
Thus for any two vertices $u$ and $v$,
$$B_{2^{\sqrt{\log n}} 4^k} (u) \cup B_{2^{\sqrt{\log n}} 4^k} (v) \neq \emptyset,$$
which means that there is a vertex $x$ within distance $2^{\sqrt{\log n}} 4^k$ of both $u$ and $v$.
So the distance between $u$ and $v$ is at most
$$2\times 2^{\sqrt{\log n}} 4^k = 2^{O(\sqrt {\log n})},$$
and the proof is complete since $u$ and $v$ were chosen arbitrarily.
\end{proof}

\begin{corollary}
The price of anarchy of any bounded budget network creation game in the SUM version is $2^{O(\sqrt{\log n})}$.
\end{corollary}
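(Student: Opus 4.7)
The plan is to reduce the statement to Theorem~\ref{t:sumupperbound} by a straightforward case analysis on whether or not the total budget suffices to create a connected graph. The proof will simply chain together the existence/realization result of Theorem~\ref{thm:ne-existence}, the connectivity lemma (Lemma~\ref{lem:connected}), and the diameter bound of Theorem~\ref{t:sumupperbound}.

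First, I would split into cases based on the sum $\sigma = b_1 + b_2 + \cdots + b_n$. If $\sigma < n-1$, then every realization of $(b_1,\ldots,b_n)$-BG is disconnected, and by our convention the distance between any two vertices in different components is $n^2$, so every realization (equilibrium or not) has diameter exactly $n^2$. Both the numerator and denominator in the definition of the price of anarchy then equal $n^2$, giving ratio $1 = 2^{O(\sqrt{\log n})}$.

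If $\sigma \geq n-1$, then by Lemma~\ref{lem:connected} every equilibrium graph has a connected underlying graph, so Theorem~\ref{t:sumupperbound} gives diameter $2^{O(\sqrt{\log n})}$, bounding the numerator. For the denominator, the construction in the proof of Theorem~\ref{thm:ne-existence} produces a realization with diameter $O(1)$ (this is exactly the content of the ``price of stability is $O(1)$'' assertion there). Dividing, the price of anarchy in this case is at most $2^{O(\sqrt{\log n})} / \Omega(1) = 2^{O(\sqrt{\log n})}$.

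There is essentially no obstacle here: the work has already been done in Theorem~\ref{t:sumupperbound} and Theorem~\ref{thm:ne-existence}, and the only thing to check is that the connectivity hypothesis needed to apply Theorem~\ref{t:sumupperbound} is automatically satisfied whenever $\sigma \geq n-1$, which is precisely Lemma~\ref{lem:connected}. The disconnected case is handled trivially by the definition of the distance constant $C_{\textrm{inf}} = n^2$, which ensures the ratio is exactly $1$ and a fortiori bounded by $2^{O(\sqrt{\log n})}$.
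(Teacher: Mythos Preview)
Your proposal is correct and follows essentially the same approach as the paper: split on whether $\sigma < n-1$ (price of anarchy equals $1$ trivially) or $\sigma \geq n-1$ (apply Lemma~\ref{lem:connected} to get connectedness, then Theorem~\ref{t:sumupperbound} to bound the numerator). The only minor remark is that for the denominator you need a \emph{lower} bound, and the trivial bound $\geq 1$ already suffices; invoking Theorem~\ref{thm:ne-existence} gives an upper bound on the denominator, which is not what is required here, though the final inequality you write is of course still correct.
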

\begin{proof}
Consider a bounded budget network creation game.
If the sum of players' budgets is less than $n-1$, then clearly any realization of the game has diameter $n^2$,
so the price of anarchy is 1.
Otherwise, by Lemma~\ref{lem:connected} every equilibrium graph is connected and so by Theorem~\ref{t:sumupperbound} has diameter $2^{O(\sqrt{\log n})}$,
and this completes the proof.
\end{proof}


\section{Vertex connectivity of equilibrium graphs in the SUM version}
\label{sec:connectivity}
One of the most important issues in designing stable networks is the connectivity of the resulting network.
In this section, we find a direct connection between the budget limits and the connectivity of an equilibrium graph
in the SUM version, which shows that we can guarantee strong connectivity or a small diameter
for our network when all players have large enough budgets.
Assume that all players have budgets at least $k$ for some positive integer $k$,
and $G$ be any equilibrium graph in the SUM version.
We show that if $G$ has diameter larger than 3, then it is $k$-connected,
i.e.~removing any $k-1$ vertices does not disconnect the graph.
By Menger's theorem (see, e.g., Theorem 3.3.6 in~\cite{diestel}),
this also means that if the diameter of $G$ is larger than 3, then for any two vertices $u$ and $v$,
there exist $k$ internally disjoint paths connecting $u$ and $v$.
For a subset $A$ of vertices, let $G-A$ denote the subgraph induced by $V(G) \setminus A$.

First, we prove a lemma that will be used in the proof of the main result of this section.

\begin{lemma}
\label{lem:connectivity}
Let $G$ be an equilibrium graph in the SUM version.
Let $C$ be a subset of vertices of $G$
and $A$ be the vertex set of a connected component of $U(G-C)$.
Assume that for all $v\in A$, $\dist(v,C)=1$ and the budget of $v$ is larger than $|C|$.
Then every vertex in $A$ has local diameter at most 2.
\end{lemma}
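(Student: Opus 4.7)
My plan is to argue by contradiction. Suppose some $v \in A$ has local diameter at least $3$ in $G$, witnessed by a vertex $u$ with $\dist_G(v,u) \geq 3$; in particular, $u$ is not a neighbor of $v$ in $U(G)$. I will describe a new strategy $S_v'$ for $v$ under which its SUM cost strictly decreases, contradicting the equilibrium assumption.

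The strategy $S_v'$ will reassign $v$'s $b_v$ out-arcs as follows: (a) one out-arc to each vertex of $C$ that is not already an in-neighbor of $v$, so that---together with $v$'s unchanged in-arcs---$v$ becomes adjacent to every vertex of $C$ in $U(G')$; (b) one additional out-arc directly to $u$, if $u \notin C$; (c) any remaining out-arcs drawn from $v$'s current $A$-out-neighborhood, prioritizing those that are also in-neighbors of $v$ (brace partners) so that removing them does not sever adjacency in $U(G')$. Since (a) and (b) together use at most $|C|+1 \leq b_v$ arcs, the construction is feasible. I will then verify three distance bounds in $G'$: (i) $\dist_{G'}(v,c)=1$ for every $c \in C$, by construction; (ii) $\dist_{G'}(v,w) \leq 2$ for every $w \in A \setminus \{v\}$, via the $C$-neighbor of $w$ guaranteed by the hypothesis together with (i); (iii) for every $w \notin A \cup C$, letting $c^*$ be the first $C$-vertex on a shortest $v$-$w$ path in $G$, the suffix from $c^*$ to $w$ does not revisit $v$ and hence survives intact in $G'$, so $\dist_{G'}(v,w) \leq 1 + \dist_G(c^*,w) \leq \dist_G(v,w)$. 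The only possible distance increases are therefore at removed $A$-out-neighbors of $v$ that are not in-neighbors, and each such increase is at most $1$ by (ii).

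Finally, I will tally gains and losses: the witness $u$ contributes savings of at least $2$ (its distance drops from $\geq 3$ to $1$ via the arc added in (a) or (b)); each vertex of $C$ not originally adjacent to $v$ contributes savings of at least $1$; and each lossy $A$-out-arc removal contributes at most $1$ to the loss. The main obstacle is the arc-count bookkeeping that verifies savings strictly exceed losses. The key trick is that $v$'s existing $C$-in-neighbors supply ``free'' $U(G)$-adjacency to $C$ without consuming out-arc budget, leaving room for $v$ to add the arc to $u$ while keeping the number of lossy $A$-arc removals under control; a short case analysis based on whether $u$ lies in $C$, in $A$, or outside $A \cup C$ then confirms a net improvement of at least $1$ in every case.
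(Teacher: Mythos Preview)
Your argument is correct and follows essentially the same approach as the paper: redirect some of $v$'s out-arcs so that $v$ becomes adjacent to every vertex of $C$ and to the far witness $u$, then use the hypothesis $\dist(w,C)=1$ for $w\in A$ to cap every new distance within $A$ at~$2$, and use the first $C$-vertex on an old shortest path to show no distance outside $A$ increases.

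The paper's execution is more economical. Rather than rebuilding $S_v$ from scratch and invoking the in-neighbor ``free adjacency'' trick, it observes directly that $v$ owns at least $|C'|+1$ arcs into $A$ (where $C'$ is the set of $C$-vertices not already adjacent to $v$), deletes exactly $|C'|+1$ of those, and adds arcs to $C'\cup\{x\}$. The tally is then a single line---loss at most $|C'|+1$, gain at least $|C'|+2$---with no case split on the location of $u$. Your route reaches the same contradiction but needs the extra bookkeeping because you spend out-arcs on \emph{all} of $C\setminus N_C^{\mathrm{in}}$ instead of just $C'$.

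Two small wording fixes in your version: in step~(c) you mean to prioritize \emph{keeping} the arcs to $A$-out-neighbors that are \emph{not} in-neighbors (equivalently, drop brace arcs first); and if the leftover budget exceeds $|A^{\mathrm{out}}|$---which can happen when $v$ has braces into $C$---you must allow the surplus slots to be filled by arbitrary new vertices, which can only help.
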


\begin{proof}
Fix a vertex $a\in A$ and let $C'$ be the set of vertices in $C$ that are not adjacent to $a$.
Note that $C'$ might be empty.
Assume for the sake of contradiction that there exists a vertex $x$ with $\dist(a,x)>2$.
Since the budget of $a$ is larger than $|C|$ and all neighbors of $a$ are in $A\cup C$,
vertex $a$ owns at least $|C'|+1$ arcs to vertices in $A$.
Let $A'$ be a set of $|C'|+1$ vertices in $A$ to which $a$ has an arc.
If vertex $a$ deletes all these arcs
and adds arcs to the vertices $C'\cup\{x\}$ instead, then the following happens to the distances between $a$ and other vertices.
\begin{itemize}
\item The distance between $a$ and every vertex in $A'$ changes from 1 to at most 2 (recall that every vertex in $A'$ has a neighbor in $C$);
\item If a vertex in $A\setminus A'$ had distance 1 to $a$, its distance to $a$ does not change.
\item If a vertex in $A\setminus A'$ had distance more than 1 to $a$, its distance to $a$ becomes at most 2 and hence does not increase.
\item The distances between $a$ and vertices outside $A$ do not increase,
moreover its distances to vertices in $C'$ decrease by at least 1,
and its distance to $x$ decreases by at least 2.
\end{itemize}
Since $|A'| = |C'|+1$, the cost of vertex $a$ decreases by this strategy change, which contradicts the assumption that $a$
is playing its best response.
Therefore every $a\in A$ has local diameter at most 2 in $G$.
\end{proof}

Now we  prove the main result of this section.

\begin{theorem}
\label{thm:k-connectivity}
Suppose that $G$ is an equilibrium graph in the SUM version and all vertices have budgets at least $k$.
If $G$ has diameter greater than 3, then it is $k$-connected.
\end{theorem}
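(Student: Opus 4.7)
The plan is a proof by contradiction: assume that $G$ is not $k$-connected yet has diameter greater than $3$, and derive a contradiction with equilibrium. Let $C$ be a minimum vertex cut of $G$, so that $|C| \leq k-1$, and let $A_{1}, A_{2}, \dots, A_{t}$ denote the connected components of $U(G-C)$, with $t \geq 2$.

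The heart of the argument is to prove the claim that every vertex $v \in V(G) \setminus C$ satisfies $\dist(v, C) = 1$. Suppose not; pick some $v$ in a component $A_{i}$ with $\dist(v, C) \geq 2$, so that all neighbors of $v$ in $U(G)$ lie inside $A_{i}$. Since $v$ has budget at least $k > |C|$, it owns at least $|C|+1$ out-arcs, all ending in $A_{i}$. Consider the swap in which $v$ removes $|C|$ of its out-arcs to $A_{i}$ and instead adds one arc from $v$ to each vertex of $C$; this is analogous to the swap used inside the proof of Lemma~\ref{lem:connectivity}. I would then verify that $v$'s SUM cost strictly decreases: for every $u \in C$ the old distance from $v$ to $u$ is at least $2$ while the new one equals $1$, and for every $u \in A_{j}$ with $j \neq i$ the shortest $v$-to-$u$ path must cross $C$, so the old distance is at least $2 + \dist(C, u)$ while the new one is at most $1 + \dist(C, u)$. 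In both cases $v$ saves at least $1$ per such vertex, totalling at least $n - |A_{i}|$ saved, while the increase in distances within $A_{i}$ is controlled because $v$ still retains at least $k - |C| \geq 1$ out-arc in $A_{i}$ and, in the new graph, can reach any vertex $u$ of $A_{i}$ via $C$ in at most $1 + \dist_{U(G)-v}(C, u)$ steps. Carrying out this bookkeeping carefully to show that the savings strictly dominate the loss is the main obstacle of the proof.

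Once the claim is established, Lemma~\ref{lem:connectivity} applies to every component $A_{i}$ (its two hypotheses, namely $\dist(v,C) = 1$ for all $v \in A_{i}$ and the budget of $v$ exceeding $|C|$, are both satisfied), so every vertex in $V(G) \setminus C$ has local diameter at most $2$. Next, I would invoke the minimality of $C$: if some $c \in C$ had no neighbor in a component $A_{i}$, then $C \setminus \{c\}$ would still separate $A_{i}$ from the rest of $G$, contradicting the minimality of $|C|$. In particular, every $c \in C$ has a neighbor in $V(G) \setminus C$.

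To finish, I would bound $\diam(G)$. Pick arbitrary $u, w \in V(G)$. If either $u$ or $w$ lies outside $C$, its local-diameter bound gives $\dist(u, w) \leq 2$. Otherwise $u, w \in C$; pick any neighbor $v \in V(G) \setminus C$ of $u$ (which exists by the previous paragraph), and then $\dist(u, w) \leq \dist(u, v) + \dist(v, w) \leq 1 + 2 = 3$. In both cases $\dist(u, w) \leq 3$, contradicting $\diam(G) > 3$ and completing the proof.
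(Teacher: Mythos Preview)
Your overall architecture matches the paper's---assume a small cut $C$, show every vertex off $C$ is adjacent to $C$, apply Lemma~\ref{lem:connectivity}, and bound the diameter---but the step you flag as ``the main obstacle'' is a genuine gap, not just bookkeeping. If $A_i$ happens to be the \emph{large} side of the cut (say $|A_i| = n-2$, $|C|=1$, and the other component a single vertex), your guaranteed savings outside $A_i$ are only $n-|A_i|=2$, whereas the losses inside $A_i$ can be of order $|A_i|\cdot l$. Your proposed bound of $1+\dist_{U(G)-v}(C,u)$ on the new intra-$A_i$ distances is uncontrolled: removing $|C|$ of $v$'s arcs can make $\dist_{U(G)-v}(C,u)$ enormous for many $u$, and retaining one arc into $A_i$ does not fix this. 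A concrete computation along the paper's lines gives a net cost change of $(|A_i|-|B_i|)(l-1)+|C|(2-l)+(1-l)$ with $B_i = V\setminus(A_i\cup C)$, which is positive when $|A_i|>|B_i|$, so no contradiction arises.

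The paper supplies two ideas you are missing. First, it chooses $v$ to be a vertex of \emph{maximum} distance $l$ from $C$ within its component; then no shortest path from another vertex of that component to $C$ passes through $v$, so after the swap every such vertex is reachable from $v$ in at most $l+1$ steps via $C$. Second, and crucially, it establishes the claim only for the \emph{smallest} component $A$, so that $|A|\le |B|$ and the inequality above is indeed negative. Once Lemma~\ref{lem:connectivity} gives every vertex of $A$ local diameter~$2$, every vertex in any other component $D$ is within distance $2$ of $A$; since such a path must cross $C$, every vertex of $D$ is adjacent to $C$ as well. So the claim for the remaining components is obtained by bootstrapping through Lemma~\ref{lem:connectivity}, not by repeating the swap argument. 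Your final paragraph (using minimality of $C$ to get each $c\in C$ a neighbour outside $C$, then bounding the diameter by $3$) is fine; the paper uses the simpler observation that $c$'s budget exceeds $|C|$, but either works.
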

\begin{proof}
Let $C$ be a minimum cardinality subset of vertices such that $U(G-C)$ is disconnected and let $k'$ be the size of $C$.
If $k' \geq k$ then $G$ is $k$-connected and the theorem is proved, so we may assume that $k' < k$,
and we need to show that the diameter of $G$ is at most 3.
Let $A$ be the vertex set of the connected component of $U(G-C)$ with minimum size,
and let $B = V(G) \backslash (A\cup C)$.

We claim that every vertex in $A$ is adjacent to some vertex in $C$.
Let $u\in A$ be a vertex with maximum distance from $C$, and let $l=\dist(u, C)$.
If $l=1$, then the claim is proved, so assume that $l>1$.
Since the budget of $u$ is larger than $k'$, there is a set $U\subseteq A$
of size $k'$ to each vertex of which $u$ has an arc.
Let us study what happens if $u$ changes its strategy
by removing the arcs to $U$ and adding arcs to all vertices in $C$.
Note that after this change, the distance between $u$ and any vertex in $A$ becomes at most $l+1$.
\begin{itemize}
\item The distance between $u$ and every vertex in $C\cup B$ decreases by at least $l-1$,
giving a total decrease of at least $(|B| + k')(l-1)$.
\item The distance between $u$ and every vertex in $U$ changes from 1 to at most $l+1$,
giving a total increase of at most $k'l$.
\item The distance between $u$ and every vertex $v\in A \setminus (U\cup\{u\})$ increases by at most $l-1$:
if $v$ was adjacent to $u$, it remains adjacent,
and if $v$ had distance at least 2, its distance becomes at most $l+1$.
Therefore, the total increase is at most $(|A|-k'-1)(l-1)$.
\end{itemize}
Therefore, the total change in the cost of $u$ is less than or equal to
$$-(|B| + k')(l-1) + k'l + (|A|-k'-1)(l-1) =  (|A|-|B|)(l-1) + k'(2-l) + 1 - l  < 0,$$
which contradicts the fact that $u$ is playing its best response.
Thus $l=1$ and every vertex in $A$ is adjacent to some vertex in $C$.

Now by Lemma~\ref{lem:connectivity}, every vertex in $A$ has local diameter 2.
Let $D$ be any connected component of $U(G-C)$ other than $A$.
Every vertex in $D$ is at distance 2 from $A$ and so is adjacent to some vertex in $C$.
Therefore, by Lemma~\ref{lem:connectivity}, every vertex in $D$ has local diameter 2.
Since $D$ was arbitrary, we conclude that every vertex in $U(G-C)$ has local diameter 2.
To complete the proof we just need to show that every vertex in $C$ has local diameter at most 3.
Let $x\in C$ be arbitrary. Since the budget of $x$ is larger than $|C|$, it has a neighbor $u$ outside $C$.
Since the local diameter of $u$ is 2, the local diameter of $x$ is at most 3 and the proof is complete.
\end{proof}


\section{Concluding remarks}
\label{sec:conclusion}
We considered network creation games in which every vertex
has a specific budget for the number of vertices it can establish links to,
and the distances are computed based on the underlying graph of the resulting network.
Two versions of this game were defined, depending on whether each vertex wants to minimize
its maximum distance or sum of distances to other vertices.
We proved that Nash equilibria exist in both versions by explicitly constructing them.
A natural question is, if the game starts from an arbitrary position and
the players keep on improving their strategies, does the game converge to an equilibrium?
If yes, then how quickly does it converge?
Note that Laoutaris~et~al.~\cite{laoutaris} demonstrated that
converging to a pure Nash equilibrium is not guaranteed in their game,
by constructing an explicit loop.

We studied the diameter of equilibrium graphs
and the price of anarchy of this game and
found asymptotically tight bounds in two extreme cases,
namely when the sum of players' budgets is $n-1$ (the threshold for connectivity),
and when all players have unit budgets.
There are other special cases that might be interesting,
for example the cases in which all players have the same budget $B > 1$.

We observed a non-monotone property of this game (in the MAX version):
when all budgets are 1, the diameter of equilibrium graphs (and the price of anarchy) is $O(1)$;
but when all budgets are at least 1, there are instances of the game
in which the diameter of equilibrium graphs is $\Omega(\sqrt{\log n})$,
and so the price of anarchy is also $\Omega(\sqrt{\log n})$.
We proved that this happens in the MAX version,
and it is interesting to know if it can also happen in the SUM version.
More precisely, when all players have positive budgets,
is the diameter of equilibrium graphs in the SUM version bounded?

In the SUM version, we proved a general upper bound of $2^{O(\sqrt{\log n})}$
for the diameter of equilibrium graphs (and the price of anarchy).
The proof is complicated and based on showing a certain expansion property for equilibrium graphs.
The bound $2^{O(\sqrt{\log n})}$ seems strange and probably is not tight.
Finally, we showed an interesting connection between the minimum budget
available to the players and the connectivity of the resulting network.
More precisely, if each vertex has budget at least $k$,
then any equilibrium graph in the SUM version has diameter less than 4, or is $k$-connected.


\vspace{0.5 cm}

\noindent {\bf Acknowledgement.}
The authors are thankful to Nastaran Nikparto for her useful comments throughout the preparation of this paper,
and to Laura Poplawski Ma for suggesting the idea of defining the distance of vertices in different components to be $C_{\mathrm{inf}} = n^2$,
which resulted in slightly cleaner arguments.

\bibliographystyle{acm}
\bibliography{networkcreationgames}

\end{document}